\documentclass[fleqn,english,twocolumn,secthm,amsthm]{autart}
\usepackage{graphicx}      
\usepackage{bm}
\usepackage{amsmath}
\usepackage{amssymb}
\usepackage{color}
\usepackage{mathtools}
\usepackage{caption}
\usepackage{multirow}
\usepackage{subfig}
\usepackage{tikz}
\usetikzlibrary{calc}
\usepackage{optidef}
\usepackage{mathtools, cuted}
\usepackage{lipsum}  

\usepackage{hyphenat}
\hyphenation{off-line}

\newcommand{\real}{\mathbb{R}}
\newcommand{\nonnegativeInt}{\mathbb{N}}
\newcommand{\proba}{\mathbb{P}}
\newcommand{\dimsys}{\nu}
\newcommand{\coeffindices}{\{0,1,\ldots,\dimsys\}}
\newcommand{\dimtru}{N_{\mathrm{T}}}
\newcommand{\coordx}{p_{\mathrm{x}}} 
\newcommand{\coordy}{p_{\mathrm{y}}} 
\newcommand{\dotcoordx}{\dot{p}_{\mathrm{x}}} 
\newcommand{\dotcoordy}{\dot{p}_{\mathrm{y}}} 
\newcommand{\eqdefn}{\triangleq} 
\newcommand{\sumdim}[2]{\sum_{i = 0}^{#2} {#1}^i}
\newcommand{\vectordim}[1]{\in \real^{#1}}
\newcommand{\matrixdim}[2]{\in \real^{{#1} \times {#2}}}
\newcommand{\E}{\mathbb{E}}
\newcommand{\OOO}{\mathcal{O}}
\newcommand{\redpower}[2]{#1^{\langle#2\rangle}}

\newcommand{\set}[1]{\left\{{#1}\right\}}
\newcommand{\setcomp}[2]{\left\{{#1}\,\middle\vert\,{#2}\right\}}
\newcommand{\trunc}[1]{\widetilde{#1}}
\newcommand{\truncdim}[2]{\widetilde{#1}^{(#2)}}
\newcommand{\err}[1]{e^{(#1)}}
\newcommand{\Err}[1]{\epsilon^{(#1)}}
\newcommand{\GErr}{\varepsilon_{P,t}}
\newcommand{\abs}[1]{| {#1} |}
\newcommand{\card}[1]{| {#1} |}
\newcommand{\norm}[1]{\| {#1} \|}
\newcommand{\captionspace}{\vskip -10pt}

\newcommand{\ini}{\textnormal{ini}}

\newcommand{\F}[1]{\mathfrak {F}({#1})}
\newcommand{\y}[2]{y_{#1}({#2})}
\newcommand{\A}[2]{\mathfrak {A}_{#1}({#2})}
\newcommand{\EE}[1]{\mathfrak{E}_{#1}}


\renewcommand{\phi}{\varphi}

\newcommand{\mytablecaption}[2]{\refstepcounter{table}\label{#2}
  \raisebox{7pt}{\normalsize{Table \thetable.\ #1}}}

\newcommand{\mycaption}[2]{\refstepcounter{figure}\label{#2}\raisebox{-7pt}
  {Fig. \thefigure.\hspace{3pt} #1}}

\newtheorem{ex}{Example}
\newtheorem{remark}{Remark}

\usepackage{subfig}

\usepackage[dcucite]{harvard}
\usepackage{balance}
\usepackage{times} 
\setlength{\mathindent}{2pt}

\newcommand{\infnorm}[1]{\left\|{#1}\right\|_\infty} 

\newcommand{\rv}[1]{#1}  

\begin{document}

\begin{frontmatter}

\title{Moment Propagation of Polynomial Systems Through Carleman Linearization for Probabilistic Safety Analysis}


\thanks[footnoteinfo]{The authors are supported by ERATO HASUO Metamathematics for Systems Design Project No.~JPMJER1603, JST; 
S.\ Pruekprasert is also supported by Grant-in-aid No. 21K14191, JSPS. 
A.\ Cetinkaya is also supported by JSPS KAKENHI Grant Numbers~JP20K14771 and~JP23K03913. 
Part of the material in this paper was presented at the 21st IFAC World Congress, Berlin, Germany.}

\author[aist]{Sasinee Pruekprasert} \ead{s.pruekprasert@aist.go.jp},
\author[aist]{Jeremy Dubut} \ead{jeremy.dubut@aist.go.jp},
\author[chd]{Toru Takisaka} \ead{takisaka@uestc.edu.cn},
\author[nii,jfli]{Clovis Eberhart} \ead{eberhart@nii.ac.jp},
\author[sbhr]{Ahmet Cetinkaya} \ead{ahmet@shibaura-it.ac.jp}
\address[aist]{National Institute of Advanced Industrial Science and Technology, Tokyo, Japan}
\address[chd]{University of Electronic Science and Technology of China, Chengdu, China}
\address[nii]{National Institute of Informatics, Tokyo, 101-8430, Japan}
\address[sbhr]{Shibaura Institute of Technology, Tokyo, 135-8548, Japan}
\address[jfli]{Japanese-French Laboratory of Informatics, IRL 3527, CNRS, Tokyo, Japan}

\begin{keyword} Stochastic systems, nonlinear systems, probabilistic safety analysis,
                moment computation, Carleman linearization \end{keyword}

\begin{abstract}
We develop a method to approximate the moments of a discrete-time
stochastic polynomial system. Our method is built upon Carleman linearization with
truncation. Specifically, we take a stochastic polynomial system with
finitely many states and   
transform it into
an infinite-dimensional system
with
linear deterministic dynamics, which describe the exact evolution of the moments of the
original polynomial system. 
We then truncate this deterministic system to obtain a finite-dimensional linear system,
and use it for moment approximation by iteratively propagating the moments
 along the finite-dimensional linear dynamics across time. 
We provide efficient online computation methods for this propagation scheme with several error bounds for the approximation.
Our results also show that precise values of certain moments \rv{at a given time step} can be obtained when the truncated system is sufficiently large.
Furthermore, we investigate techniques to reduce the offline computation load using reduced Kronecker power.  
Based on the obtained approximate moments and their errors, we also
provide hyperellipsoidal regions that are safe for some given probability bound. 
Those bounds allow us to conduct probabilistic safety analysis online
through convex optimization. We demonstrate our results on a logistic
map with stochastic dynamics and a vehicle dynamics subject to stochastic
disturbance.
\end{abstract}

\end{frontmatter}

\section{Introduction}

	\begin{figure*}
  {\small
	\begin{center}
    \newcommand{\hlength}{4.5}
    \newcommand{\vlength}{1}
   \resizebox{.9\textwidth}{!}
    { 
    \begin{tikzpicture}
		\node[draw,align=center] (1) at (0,0) {finite-dim.\\polynomial\\stochastic};
		\node[draw,align=center] (2) at (0.8*\hlength,0) {infinite-dim.\\linear\\stochastic};
		\node[draw,align=center] (3) at (1.6*\hlength,0) {infinite-dim.\\linear\\deterministic};
		\node[draw,align=center] (4u) at (3*\hlength,0.8\vlength) {finite-dim. linear deterministic};
		\node[draw,align=center] (4d) at 
        (3*\hlength,-0.7\vlength) {approximation error bounds};
		\node[draw,align=center] (5) at (4.3*\hlength,0) {prob.\\ellipsoid\\bounds};
      \node[align=center] () at ($($(5.west)!(4d.east)!(5.east)$)!0.25!(5.west)$) {Tail Prob. Approx.\\Section~\ref{sec:ellipsoid}};
		\path[->] (1) edge node[above]{Carleman lin.} node[below]{Section~\ref{sec:CarlemanLin}} (2)
		          (2) edge node[above]{Expectation} node[below]{Section~\ref{sec:MomentEq}} (3)
		          (3) edge node[above,sloped,align=center]{Truncation\\Section~\ref{sec:TruncatedSys}} (4u)
              (3) edge node[above,sloped,align=center]{Truncation Error}
                       node[below,sloped]{Section~\ref{sec:TruncationError} and ~\ref{subsec:approx_indiv}} (4d)
		          (4u) edge (5)
		          (4d) edge (5);
		\end{tikzpicture}}
		\mycaption{The different steps of the proposed method}{fig:steps}
	\end{center}
  }
	\end{figure*}

Uncertainty is one of the critical issues that make safety assurance of
cyber-physical systems a difficult task. Handling uncertainty in automated
driving systems is especially challenging, as motion planning algorithms are
required to take account of uncertainty in both the measurement and the
actuation mechanisms~\cite{BryR11rapidly}. Depending on the dynamical variations
in the environment and the vehicle-controller interaction, the actual trajectory
taken by a vehicle may differ substantially from a trajectory that is
precomputed based on nominal dynamics of the vehicle. Most motion planning
algorithms address this issue by assessing the collision risk for a set of
possible future trajectories of the vehicle and not just for the nominal
trajectory. Moreover, there are different approaches regarding how the
uncertainty is modeled. A well known approach is based on considering known
deterministic bounds on the disturbance and the measurement noise (see
\citeasnoun{althoff2014online}, and the references therein). Another approach is
to use stochastic system models to describe the effects of uncertainty.

The effect of uncertainty on the state of a stochastic system can be quantified
through the so-called \emph{mean and covariance propagation} method. This method
is especially effective for systems with linear dynamics where the uncertainty
is modeled as additive Gaussian noise. For those systems, if the initial state
distribution is Gaussian, the distribution of the state at any future time
remains Gaussian. Moreover, the mean and the covariance of the state can be
computed iteratively using linear equations. Previously,
\citeasnoun{BryR11rapidly} and \citeasnoun{BanzhafDSZ18footprints} used this
 together with Kalman filters in motion planning tasks.

While mean and covariance propagation is an effective method for linear Gaussian
systems, many cyber-physical systems involve nonlinearities and the 
noise cannot always be modeled to be an additive noise or a Gaussian one. 
 Our goal 
 is to develop
a moment propagation method for nonlinear systems where the noise can be non-additive
and non-Gaussian. Specifically, our propagation method can approximate the mean,
covariance, and higher moments of such systems. We consider discrete-time
stochastic nonlinear systems with polynomial dynamics, where the coefficients
are randomly varying. Our model can describe both additive and multiplicative
noise. Our propagation method is based on Carleman linearization with
truncation~\cite{steeb1980non}. By following the Carleman linearization
technique, we obtain a new dynamical system that describes how the Kronecker
powers of the system state evolve. This system is linear, but it is
infinite-dimensional. The expectation of the states of this linear system
correspond to all moments of the original nonlinear system. By truncating the
expected dynamics of the linear system at a certain truncation limit, we obtain
a finite-dimensional linear system. Using the equations of this
finite-dimensional system, our method iteratively computes the approximate moments of the nonlinear system up to the provided truncation limit.

The mean, covariance, and higher moments of a system can provide useful
statistical information regarding possible future state trajectories and can be
used in safety verification. In this paper, we use our moment approximation
method as a basis to develop a stochastic safety verification framework. Our
framework uses tail probability bounds~\cite{gray1991general}.   
 We provide hyperellipsoidal safety region, for which the probability of the system state to be outside that region is within a given probability bound.
Such safety regions are characterized using positive-semidefinite matrices. 
Previously, a scenario-based approach was used by
\citeasnoun{shen2020probabilistic} to obtain similar safety regions for
vehicles. This scenario-based approach is different from our moment-based approach:
it relies on generating sufficiently many sample state trajectories using the
information of the distributions of random elements in the dynamics. In our
approach, we do not need to generate sample trajectories.

In this paper, we propose a novel tail probability-based safety verification framework consisting of two phases.
The first one is an offline computation phase 
to compute the approximate moment dynamics and compute the upper bounds for their errors. 
The second phase is a fast online computation to propagate the moments and their errors through the truncated dynamics, then use them to compute probabilistic ellipsoid bounds.
A key ingredient in our framework is to efficiently compute bounds on moment approximation errors. 
First, we present the exact approximation error for a particular moment as a
sum.
However, for the exact computation, the number of terms is a quickly growing function of the dimension of the system and the time. 
We provide efficient online computation methods for this propagation scheme with several error bounds for the approximation.
Furthermore, we 
develop novel techniques to reduce the offline computation load using reduced Kronecker power.


Remark that, similarly to~\citeasnoun{BryR11rapidly}, 
this paper focuses on discrete-time systems. 
 We could also 
take into account the error due to going from continuous-time to discrete-time dynamics by 
incorporating Lagrange remainders in our analysis, but we decided against it to keep the 
presentation of our method simple.
Remark also that extending the analysis of~\citeasnoun{forets2017explicit} on continuous-time systems 
by adding general 
noise with both additive and multiplicative noise terms is not easily achievable, which motivated the present work.

We present two case studies to demonstrate various aspects of our method. Our
first case study is on a scalar stochastic logistic map.
In particular, we
consider uniformly distributed growth/decay rates in the dynamics. We run moment
approximation and tail probability-based safety analysis. In the safety
analysis, we check the conservativeness of our tail-probability bounds through
Monte Carlo simulations. We observe that our obtained safety regions are quite
tight for small durations. We also observe that our method is advantageous in
terms of speed. It does not require drawing random variables, and as a result, it
is much faster than Monte Carlo simulations even with small number of samples.
Another advantage of our method is that it provides a theoretical guarantee for
 obtaining the safety regions.

Our second case study is on a vehicle with kinematic bicycle model. First, we
use second-order Taylor expansion to obtain a discrete-time nonlinear polynomial
system describing the vehicle. Then we use our approximate moment propagation
approach to study how the vehicle's future states are affected by the
uncertainty in the acceleration and in the initial vehicle state.
Our approach uses the moments of the initial state of the vehicle and the
moments for the acceleration to approximate future moments of the state. Here
the uncertainty in the initial state is caused by the measurement noise and thus
initial state moments are derived from the statistical properties of the
measurement error. On the other hand uncertainty in the acceleration is related
to actuators and the environment.
In a typical control loop, our approach can be applied in a
receding horizon fashion.
In particular, computation is done by
first obtaining new state measurements which are used for specifying the moments
of the new ``initial'' state; then these moments are propagated to
approximately compute the moments of the state associated with future time
instants.

The rest of the paper is organized as follows (see also 
Figure~\ref{fig:steps}). 
Section~\ref{sec:relate} presents related work on Carleman linearization. 
In Section~\ref{sec:Carleman}, we start from a finite-dimensional
stochastic polynomial system, on which we apply the Carleman linearization
technique to get an infinite-dimensional linear stochastic system.
Then, we
 turn the linear stochastic system into a deterministic one by taking expectation. 
We present our
truncation-based moment approximation method and study the error it introduces
in Section~\ref{sec:Approximation}. 
Then, 
in Section~\ref{sec:ellipsoid}, we provide probabilistic
ellipsoid bounds based on approximations
of the first and second moments.
Section~\ref{section: reduced Kronecker} presents techniques to reduce the offline computation load using the concept of reduced Kronecker power.
We provide two numerical examples to demonstrate the applicability of our approach in
Section~\ref{sec:numerical-example}. Finally, in Section~\ref{sec:Conclusion},
we conclude the paper.

\section{Related Work}\label{sec:relate}
The characterization of the dynamics under truncated Carleman linearization and
the truncation error analysis in Section~\ref{sec:Carleman} appeared previously
in our preliminary conference report \cite{pruekprasert2020moment}. 

In this
paper, we present new results on tail-probability analysis 
and a two-step approximation procedure for  confidence ellipsoid bounds of the system state
in Section~\ref{sec:ellipsoid}. We
investigate techniques to improve the computational efficiency of our methods in
Section~\ref{section: reduced Kronecker}.  
We also provide a new upper bound for the approximation error from truncation in Section~\ref{subsubsec:boundK}, using partial exact computation with indices on moment coordinates. 
We illustrate these new results and techniques 
in
Section~\ref{sec:numerical-example} through new simulations for a case study on
vehicle dynamics.

The Carleman linearization technique is well-known in the nonlinear systems
literature. For deterministic systems, it has been used for approximation of
nonlinear models by linear
ones~\cite{bellman1963some,steeb1980non,tuwaim1998discretization}. Recently,
\citeasnoun{amini2019carleman} used a Carleman linearization approach to
design state feedback controllers for continuous-time nonlinear polynomial
systems, 
 \citeasnoun{hashemian2019feedback} used Carleman linearization in
model predictive control of continuous-time deterministic nonlinear systems, \rv{and
\citeasnoun{amini2021error} obtained error bounds for Carleman linearization with truncation}. Furthermore, \citeasnoun{amini2020quadratization} proposed 
control
frameworks based on Carleman linearization.  
There, an approach using Carleman linearization with
truncation allows the authors to approximately represent the
Hamilton-Jacobi-Bellman equation (arising in optimal control of nonlinear
systems) as an operator equation in quadratic form. This representation is then
used to derive the approximate value function of the optimal control system as a
quadratic Lyapunov function. The main advantage of the method is that it yields
an iterative procedure to approximate the value function that converges to the
true value. In our work, we do not investigate the optimal control problem and
we use Carleman linearization with truncation to approximate moments of
stochastic systems instead of value functions. 

There are fewer results on Carleman linearization for stochastic systems.
Specifically, \citeasnoun{wong1983carleman}
investigated bilinear noise terms in Ito-type stochastic differential equations
and used Carleman linearization to describe how the moments of the stochastic
differential equations evolve. Furthermore, \citeasnoun{rauh2009carleman}
considered continuous-time nonlinear systems with additive noise and used
a Carleman linearization technique in conjunction with a series expansion approach
to approximate such systems with discrete-time linear systems. They then
demonstrated that a Kalman filter can be used on the obtained linear systems for
state estimation. In addition, \citeasnoun{cacace2014carleman} and
\citeasnoun{cacace2019optimal} proposed Carleman linearization-based
sampled-data filters for nonlinear stochastic differential equations driven by
Wiener noise.
More recently, \citeasnoun{jasour2021moment} used a technique akin to
Carleman linearization in
moment propagation of a class of discrete-time mixed trigonometric polynomial
systems with probabilistic disturbance. They show that several systems of
interest are mixed trigonometric polynomial of degree $1$, for which moment
propagation without error is computationally feasible.
The truncation error analysis in our paper is partly motivated by
the work of \citeasnoun{forets2017explicit}, which derives tight approximation error
bounds for the truncated Carleman linearization of deterministic continuous-time
systems. We note that deriving linear representations of finite-dimensional
nonlinear systems can also be achieved through the use of Koopman
operators~\cite{goswami2017global,mesbahi2019modal}. A Koopman operator approach
has recently 
been used in model predictive control of vehicles with
deterministic nonlinear dynamics \cite{cibulka2020model}.

\textbf{Notations.} 
We use $\real$ and $\nonnegativeInt$ for the sets of real numbers and non-negative integers, respectively.
We write $1$ for the $(1\times1)$-matrix $\begin{bmatrix}
1 \end{bmatrix}$.
We denote by $M_1\otimes M_2$ the Kronecker product of two matrices $M_1$ and $M_2$.
 We use $M^{[k]}$ to denote the $k$th Kronecker power of the matrix $M$, given by $M^{[0]} = 1 $ 
and
	$M^{[k]}=M^{[k-1]}\otimes M$ for $k > 0$.
For a matrix
$M$ 
 of random variables, we write 
$\E[M] $ for its expectation.

%
%

\section{Carleman Linearization for Stochastic Polynomial Systems}
\label{sec:Carleman}

In this section, we present the Carleman linearization technique to transform
  a finite-dimensional discrete-time stochastic polynomial system into
  an infinite-dimensional one, then take expectation to get a
  deterministic system that describes the evolution of all moments of
  the system state.

\subsection{Discrete-Time Stochastic Polynomial Systems}
	
	We consider the following finite-dimensional discrete-time stochastic polynomial system
	\begin{align}
	\label{eq:system}
	\begin{split}
	x(t+1) &= \sum\limits_{i = 0}^{\dimsys} F_i(t)x^{[i]}(t), \quad t\in\nonnegativeInt,	
	\\
	x(0) &=x_\ini,
	\end{split}
	\end{align}
	where $x(t)\in \real^n$ is the state vector and $F_i(t) \in
  \real^{n\times n^i}$, $i\in\coeffindices$, are 
  \rv{matrix-valued stochastic processes.}
  More precisely, we assume given a probability space $\Omega$,
  then 
  $x_\ini$ (\emph{resp.}\ 
  $x(t)$, $F_i(t)$) is a measurable
  function from $\Omega$ to $\real^n$ (\emph{resp.}\ $\real^n$,
  $\real^{n\times n^i}$). 
  We consider the scenarios where all $F_i(t)$'s have known distributions.
	
	Systems of the form \eqref{eq:system} can be used to model dynamics with both 
	additive and multiplicative noise terms. The vector $F_0(t)\in \real^n$ 
	represents additive noise, as $x^{[0]}(t)=1$, while the matrices
	$F_1(t),\ldots,F_{\dimsys}(t)$ characterize the effects of multiplicative noise. Then, we define
	\begin{equation} \label{eq:F}
	\F{t} \eqdefn
	\begin{bmatrix}
	F_0(t) & F_1(t) &\cdots & F_{\dimsys}(t)
	\end{bmatrix}
	\matrixdim{n}{\sumdim{n}{\dimsys}},
	\end{equation}
which is a matrix with $n$ rows and $\sum_{i=0}^{\dimsys} n^i$ columns.
 
 We make the following assumptions concerning the coefficient matrices and the random initial state $x_\ini$.

	\begin{assum} \label{assum:f-independence}
    The vector $x_\ini$ and the matrices ${\F{t}}$, $t\in\nonnegativeInt$, are all independent.
	\end{assum}
	
	\begin{assum} \label{assum:f-identical}
    The matrices $\setcomp{\F{t}}{t\in\nonnegativeInt}$ are identically
    distributed.
  \end{assum}
  
  
	Note that Assumptions~\ref{assum:f-independence} and~\ref{assum:f-identical}
	 are not overly restrictive and they hold in fairly general situations as we 
	 discuss in Section~\ref{sec:numerical-example}. 
	 Notice also that under Assumption~\ref{assum:f-independence}, matrices 
	 $F_i(t)$ and $F_j(t)$ are still allowed to statistically depend on each other. 
	 Moreover, Assumption~\ref{assum:f-identical} allows us to obtain a 
	 ``time-invariant'' method to compute moments, further yielding computational 
	 advantage.
	 

\begin{ex}\label{ex:1}
  Consider 
\begin{align*} 
x_1(t+1) &= a(t) x_1(t) x_2(t), &x_1(0) = x_{\ini, 1},\\
x_2(t+1) &= a(t) (x_1(t) + x_2(t)), &x_2(0) = x_{\ini, 2},
\end{align*}
where $x_{\ini, 1}$, $x_{\ini, 2}$, and $a(t)$, $t \in \mathbb{N}$, are independent and identically distributed 
 random variables. 
By using $x(t) = \begin{bmatrix} x_1(t) & x_2(t) \end{bmatrix}^\intercal$, $ F_0(t) =\begin{bmatrix} 0& 0\end{bmatrix}^\intercal$,  $ F_1(t) = \begin{bmatrix} 0& 0\\a(t)& a(t) \end{bmatrix} $, and $F_2(t) = \begin{bmatrix} 0 & a(t) & 0 & 0 \\ 0& 0 & 0 & 0 \end{bmatrix}$,
we can rewrite the system as in the form of \eqref{eq:system}, i.e.,
\begin{align}\label{ex:1 eq1}
x(\rv{t+1})&=F_0 x^{[0]}(t) + F_1 x^{[1]}(t) + 
F_2 x^{[2]}(t).
\end{align}
\end{ex}	

Our objective is to approximate the moments of the state of 
 \eqref{eq:system}, 
which we will use in Section~\ref{sec:ellipsoid} to compute
 probabilistic safety areas for the state $x(t)$ at a given time $t$.
 For Example~\ref{ex:1}, given $j$ and $t$, our objective is to approximate the moment $\mathbb{E}[x^{[j]}(t)]$. 
Note that the initial moments $\mathbb{E}[x^{[j]}(0)]$ can be computed from the probability distribution of $x_\ini$.
	
	\subsection{Carleman Linearization}
	\label{sec:CarlemanLin}
	We use Carleman linearization to obtain an infinite-dimensional
  linear system that describes the evolution of the Kronecker powers
  of the state vector $x(t)$.
	By defining
	\begin{align}
	{\y{k}{t}} \eqdefn
	\begin{bmatrix}
	x^{[0]}(t)^\intercal & x^{[1]}(t)^\intercal  &
	\cdots & x^{[k]}(t)^\intercal
	\end{bmatrix}^\intercal, 
	 \label{eq:ydef}
	\end{align}
	we can rewrite the dynamical system \eqref{eq:system}  using \eqref{eq:F} and \eqref{eq:ydef} as
	\begin{align*}
	x(t+1) &= 
	\begin{bmatrix}
	F_0(t) &\cdots & F_{\dimsys}(t)
	\end{bmatrix}
	\begin{bmatrix}
	x^{[0]}(t)^\intercal &&
	\cdots && x^{[\dimsys]}(t)^\intercal
	\end{bmatrix}^\intercal
	\\
	&=
	\F{t}\, \y{\dimsys}{t}. 
	\end{align*} 
	Therefore, for all $j \in \nonnegativeInt$, we have
	\begin{align}
	  x^{[j]}(t+1) = (\F{t} \,\y{\dimsys}{t})^{[j]}.
	  \label{eq Fy}
	\end{align} 
	Then, we introduce Theorem~\ref{theorem H} to compute $(\F{t} \,\y{\dimsys}{t})^{[j]}$.	
\begin{thm}
\label{theorem H}
 Consider $\F{t}$ and $\y{\dimsys}{t}$ given in equations~\eqref{eq:F} and~\eqref{eq:ydef}, respectively. We have
\begin{align*}
	  \big(\F{t}&\,\y{\dimsys}{t}\big)^{[j]} \\
       &= \sum_{k=0}^{j\dimsys} \left(
      \smashoperator[r]{\sum_{(i_1, \ldots,i_l) \in H_{j,k}}}
      F_{i_1}(t) \otimes \cdots \otimes F_{i_j}(t) \right) x^{[k]}(t),
	\end{align*}
	for $j\in\nonnegativeInt$, where 
	\begin{equation*} 
H_{j,k} \eqdefn \setcomp{(i_1, \ldots,i_j) }{\sum\limits_{l=1}^j i_l
  = k \text{ and } 0 \leq i_l\leq \dimsys}.
\end{equation*}	
\end{thm}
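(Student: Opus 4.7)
The plan is to start from the identity $\F{t}\,\y{\dimsys}{t} = \sum_{i=0}^{\dimsys} F_i(t)\,x^{[i]}(t)$, which follows directly from the block structures of $\F{t}$ and $\y{\dimsys}{t}$ given in \eqref{eq:F} and \eqref{eq:ydef}. Raising this identity to the $j$-th Kronecker power and expanding via multilinearity of $\otimes$ over addition yields
\begin{align*}
(\F{t}\,\y{\dimsys}{t})^{[j]} = \sum_{i_1=0}^{\dimsys}\!\cdots\!\sum_{i_j=0}^{\dimsys}\bigl(F_{i_1}(t)x^{[i_1]}(t)\bigr)\otimes\cdots\otimes\bigl(F_{i_j}(t)x^{[i_j]}(t)\bigr).
\end{align*}
This is the engine of the proof; everything after it is bookkeeping.

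Next, I would apply the mixed-product property $(A\otimes B)(C\otimes D) = (AC)\otimes(BD)$ iteratively to each summand to separate the matrix factors from the vector factors, rewriting
\begin{align*}
\bigl(F_{i_1}(t)x^{[i_1]}(t)\bigr)&\otimes\cdots\otimes\bigl(F_{i_j}(t)x^{[i_j]}(t)\bigr) \\
&= \bigl(F_{i_1}(t)\otimes\cdots\otimes F_{i_j}(t)\bigr)\bigl(x^{[i_1]}(t)\otimes\cdots\otimes x^{[i_j]}(t)\bigr).
\end{align*}
Then I would use the telescoping identity $x^{[a]}\otimes x^{[b]} = x^{[a+b]}$ (applied $j-1$ times) to collapse the vector factor into $x^{[i_1+\cdots+i_j]}(t)$.

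To finish, I would reindex the $j$-fold sum by $k \eqdefn i_1+\cdots+i_j$, noting that $k$ ranges over $\{0,1,\ldots,j\dimsys\}$ because each $i_l \in \{0,1,\ldots,\dimsys\}$. Collecting all index tuples producing the same $k$ is precisely the definition of $H_{j,k}$, which gives the stated expression.

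The argument is essentially a direct calculation, so there is no deep obstacle; the main thing to be careful about is the ordering in the mixed-product step, since the Kronecker product is not commutative and the grouping $(F_{i_1}\otimes\cdots\otimes F_{i_j})(x^{[i_1]}\otimes\cdots\otimes x^{[i_j]})$ must preserve the left-to-right order of the factors throughout. If preferred, one can avoid writing out the $j$-fold sum in one go by an induction on $j$: the base case $j=0$ gives $1$ (matching $k=0$ with the empty tuple), $j=1$ reproduces $\F{t}\,\y{\dimsys}{t}$ itself, and the inductive step multiplies the $j$-th Kronecker power by one more copy of $\F{t}\,\y{\dimsys}{t}$ and again applies the mixed-product property, which amounts to convolving the index sets $H_{j,\cdot}$ and $H_{1,\cdot}$ to produce $H_{j+1,\cdot}$.
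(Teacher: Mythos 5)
Your proposal is correct and matches the paper's argument in substance: the paper's Appendix~A proves the identity by induction on $j$ (with an explicit $j=2$ base case and the regrouping lemma $H_{j+1,k}=\bigcup_{n=0}^{\dimsys}\{n\}\times H_{j,k-n}$), and each inductive step is exactly one application of your distributivity, mixed-product, and reindexing calculation. The inductive variant you sketch in your final paragraph is precisely the paper's proof, so the two arguments differ only in packaging.
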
	
The proof of Theorem~\ref{theorem H} is presented in Appendix~\ref{appendixA}.

By \eqref{eq Fy} and Theorem~\ref{theorem H}, 
we obtain
 the infinite-dimensional linear system
	\begin{align}
    x^{[j]}(t+1) &= \sum_{k=0}^{j\dimsys} \left(
      \smashoperator[r]{\sum_{(i_1,\ldots,i_j) \in H_{j,k}}}
      F_{i_1}(t) \otimes \cdots \otimes F_{i_j}(t) \right) x^{[k]}(t),\nonumber
      \\
	x^{[j]}(0) &= x^{[j]}_\ini,\label{eq:infsys}
	\end{align}
	which describes the evolution of all Kronecker powers $x^{[0]}(t),
	x^{[1]}(t), \ldots$ of the state $x(t)$.
	
	In order to give a simpler description of the system, we introduce the
	matrices $A_{j,k}(t) \in \real^{n^j \times n^k}$ given by
	\begin{align}\label{eq:defA}
	 A_{j,k}(t) \eqdefn \sum_{(i_1,\ldots,i_j)\in H_{j,k}} F_{i_{1}}(t)
	\otimes \cdots \otimes F_{i_j}(t).
	\end{align}
	Note in particular that 
 $H_{0,0}$ only contains the empty tuple and  $A_{0,0}(t) = 1$, as it is the identity of $\otimes$. We also have 
	\begin{align} \label{A0}
A_{j,k}(t) = 0 \text{ if }  k > j \dimsys,
	\end{align}
since $H_{j,k}$ is then empty.
	We then introduce, for all $N, M \in \mathbb{N}$, the matrix {$\A{{N,M}}{t}$} defined by blocks as:
	\begin{align} \label{eq:A}
	\A{{N,M}}{t} & \triangleq\left[\begin{array}{ccc}
	A_{0,0}(t) & \cdots & A_{0,M}(t)\\
	\vdots & \ddots & \vdots\\
	A_{N,0}(t) & \cdots & A_{N,M}(t)
	\end{array}\right], 
	\end{align}
which is a matrix of $\sum_{i=0}^{N} n^i$ rows and $\sum_{i=0}^{M} n^i$ columns.
 From \eqref{eq:infsys}, for any $k \in \mathbb{N}$, we have
	\begin{align*} 
	\begin{split}
	\begin{bmatrix}
	x^{[0]}(t+1)\\ 
	\vdots 
	\\
	x^{[k]}(t+1) 
    \end{bmatrix} 
    =
    \left[\begin{array}{ccc}
	A_{0,0}(t) & \cdots & A_{0,k \dimsys}(t)\\
	\vdots & \ddots & \vdots\\
	A_{k,0}(t) & \cdots & A_{k,k \dimsys}(t)
	\end{array}\right] 
	\begin{bmatrix}
	x^{[0]}(t) \\ 
	\vdots 
	\\
	x^{[k]}(t) \\  	
	\vdots 
	\\
	x^{[k \dimsys]}(t)
    \end{bmatrix}, 
    \end{split}
	\end{align*}
	which can be rewritten using \eqref{eq:ydef} and \eqref{eq:A} as
	\begin{align} \label{eq A k x k^2}
	\y{k}{\rv{t+1}} 
	=
	\A{k, k\dimsys}{t} \, \y{k \dimsys}{t}. 
	\end{align}
Notice that 
 is not closed: we need $\y{k \dimsys}{t}$ 
  in order to compute 
	$\y{k }{t+1}$. 
	 As a result, the linear model describing the system cannot be written using a finite state space. 
  \rv{However, in Section~\ref{sec:Approximation},  we can restrict
   this system  to a finite dimensional one as we are interested in a finite number of moments on a finite time horizon.}
	
	\begin{ex}\label{ex:2}
	For the system~\eqref{ex:1 eq1} in Example~\ref{ex:1},  
	\begin{align*} 
	x^{[2]}&(t+1) = \\&\big(F_0(t) x^{[0]}(t) + F_1(t) x^{[1]}(t) + F_2(t) x^{[2]}(t) \big)^{[2]}.
	\end{align*}  
By~\eqref{eq A k x k^2}, we have
\begin{equation}\label{eq Ex2}
\y{2}{t+1} = \A{2, 4}{t}\y{4}{t},
\end{equation} 
where\\
$  \A{2, 4}{t} = \begin{bmatrix}
	1 & 0 & 0 & 0  & 0\\
	F_0 & F_1 & F_2 & 0 & 0\\
	F_0^{[2]}
		& \begin{small}
		\begin{smallmatrix}F_0 \otimes F_1\\ + F_1 \otimes F_0\end{smallmatrix} 
		\end{small}
		&
        \begin{small}
		\begin{smallmatrix}{F_0 \otimes F_2 + F_1^{[2]}}\\ {+ F_2 \otimes F_0}\end{smallmatrix} 
		\end{small}
        & 
		\begin{small}
		\begin{smallmatrix}F_1 \otimes F_2\\ + F_2 \otimes F_1\end{smallmatrix} 
		\end{small} 
		& F_2^{[2]}
	\end{bmatrix}  $.

Note that we omitted time dependence of $F_0$, $F_1$ and $F_2$ for brevity. 
Since $ F_0(t) =\begin{bmatrix} 0& 0\end{bmatrix}^\intercal$, \eqref{eq Ex2} can also be written as  
	\begin{alignat*} {2}
	&\begin{bmatrix} 1 \\ x^{[1]}
	(t+1) \\ x^{[2]}(t+1)\end{bmatrix} =&&
	\begin{bmatrix}
	1 & 0 & 0 & 0  & 0\\
	0 & F_1 & F_2 & 0 & 0\\
	0 & 0
		& F_1^{[2]}& 
		\begin{small}
		\begin{smallmatrix}F_1 \otimes F_2\\ + F_2 \otimes F_1\end{smallmatrix} 
		\end{small} 
		& F_2^{[2]}
	\end{bmatrix}   
	\begin{bmatrix}
	1 \\ x^{[1]}(t) \\ x^{[2]}(t) \\ x^{[3]}(t) \\ x^{[4]}(t)	
	\end{bmatrix}.
	\end{alignat*} 
	\end{ex}
	Recall that our objective is to approximate  $\mathbb{E}[x^{[j]}(t)]$	for given $j$ and $t$.
	In the next section, we will consider a vector $\mathbb{E}[\y{k}{t}]$ which includes all moments $\mathbb{E}[x^{[j]}(t)]$ where $j \leq k$.

	\subsection{Moment Equations}
  \label{sec:MomentEq}

	We now derive the deterministic system that describes the evolution of
	the moments of $x(t)$ by taking expectation in~\eqref{eq A k x k^2}.
	This gives
	\begin{align*}
	\mathbb{E}[\y{k}{t+1}] &= \mathbb{E}[
	\A{k, k \dimsys}{t} \y{k\dimsys}{t}].
	\end{align*}

	By iteration of \eqref{eq A k x k^2}, 
	we get that $\y{k\dimsys}{t}$, $t\in \nonnegativeInt$, is given by
	\begin{align}\label{eq:ykdsatt}
	\y{k\dimsys}{t}& = 
	\A{k\dimsys,k\dimsys^2}{t-1} \cdots
	\A{k\dimsys^{t},k\dimsys^{t+1}}{0}  \cdot \y{{k\dimsys^{t+1}}}{0}. 
	\end{align}
	It follows from Assumption~\ref{assum:f-independence} that $\A{k,k\dimsys}{t}$
	 and $\y{k\dimsys}{t}$  in \eqref{eq A k x k^2} are mutually
	 independent. To see this, observe that $\A{k,k\dimsys}{t}$ is composed of 
	 the matrices $F_i(t)$, which are independent of $x_\ini$ and 
	 $\F{t-1},\ldots,\F{0}$, which determine $\y{k\dimsys}{t}$
	 as given by \eqref{eq:ykdsatt}. It then follows that
	\begin{align*}
\mathbb{E}[\y{k}{t+1}] &= \mathbb{E}[\A{k,k\dimsys}{t}]\mathbb{E}[\y{k\dimsys}{t}].
	\end{align*}
Here again, to give a simpler description of the system, we introduce
	new matrices.
	Notice that the coefficients of the matrix $\mathbb{E}[\A{N,M}{t}]$ are
	products of the moments of coefficients of $\F{t}$ and thus independent
	of $t$ by Assumption~\ref{assum:f-identical}.

	Thereby, we use the notations: 
	\begin{align}
	\begin{split}	
	\label{eq:defE}
	E_{i,j} &= \mathbb{E}[A_{i,j}(t)] \matrixdim{n^i}{n^j}, \\
	\EE{N,M} &= \mathbb{E}[\A{N,M}{t}]\matrixdim{\sumdim{n}{N}}{\sumdim{n}{M}},
	\end{split}
	\end{align}
	emphasizing the fact that
	both are independent of the time.
	
	Then, from \rv{\eqref{eq:infsys}}, we can obtain the following system.
	\begin{align} \label{eq:momentsequation}
	\begin{split}
	\mathbb{E}[x^{[j]}(t+1)] &= \sum\limits_{k = 0}^{j\dimsys} E_{j,k}\mathbb{E}[x^{[k]}(t)],\quad t\in \nonnegativeInt, \\
	\mathbb{E}[x^{[j]}(0)] &= \mathbb{E}[x_\ini^{[j]}].
	\end{split}
	\end{align}
	
	From \eqref{eq:defA}, the matrices $E_{j,k}$ only depend on the moments $\E[F_i(\cdot)]$, which are independent of $t$ as discussed above.
	As a result, the matrices $E_{j,k}$ can be computed offline.

\section{Moment Approximation through Truncation}
	\label{sec:Approximation}
	
	We will now introduce an approach that allows us to compute
	\emph{approximations} of the moments of $x(t)$.
	This truncation approach is critical, as an exact computation of the
	moments is impossible from a practical point of view.
	Indeed, by~\eqref{eq:ydef}, computing the first $k$ moments
	at time $t$ amounts to computing $\mathbb{E}[\y{k}{t}]$. So, by
  	iteration of~\eqref{eq:momentsequation},
  \begin{equation}
    \mathbb{E}[\y{k}{\rv{t+1}}] = \EE{k,k\dimsys}\cdots \EE{k\dimsys^{t-1},k\dimsys^t}\E[\y{k\dimsys^t}{0}]\rlap{.}
    \label{eq:momentsequationiterated}
  \end{equation}
	As
	this indicates, exact computation of the first $k$ moments requires
	the knowledge of matrices $\EE{k,k\dimsys}$, $\EE{k\dimsys,k\dimsys^2}$,
	$\ldots$, $\EE{k\dimsys^{t-1},k\dimsys^t}$ of exponentially increasing
	dimensions, making any practical computation unrealistic.

Thus, in the next section, we compute approximations of the moments of $x(t)$ by trucating the system~\eqref{eq:momentsequation}.

	\subsection{Approximate Moments and the Truncated System}
	\label{sec:TruncatedSys}

	In this section, we define the system that we use to compute
	approximations of the moments of $x(t)$.
	
	We fix the truncation limit $\dimtru \in \mathbb{N}$, \rv{and define approximate moments 
 $\truncdim{x}{i}(t) \in \real^{n^i}$},
	$i\in{1,\ldots,\dimtru}$, by
	\begin{align}\label{eq tilde x}
    & \begin{bmatrix}
      1 & {\truncdim{x}{1}(t)}^\intercal &  
      \cdots  &  {\truncdim{x}{\dimtru}(t) }^\intercal
    \end{bmatrix}^\intercal \nonumber \\
    & \quad =
    {\EE{\dimtru,\dimtru}^t}
    \begin{bmatrix}
      1 &  {\mathbb{E}[{x_\ini}]}^\intercal & 
      \cdots  & {\mathbb{E}[x_\ini^{[\dimtru]}]}^\intercal
    \end{bmatrix}^\intercal.
	\end{align}
 
  Notice that \eqref{eq tilde x} follows the same pattern as~\eqref{eq:momentsequationiterated}, but
  use the square matrix $\EE{\dimtru,\dimtru}$.
	Here, the vector $\truncdim{x}{i}(t)$ represents an approximation of the
	moment $\mathbb{E}[x^{[i]}(t)]$ that is computed using only our
	knowledge of the first $\dimtru$ moments of $x_\ini$.
 \rv{Note that the superscript $(i)$ is only for notation purpose, and it has no relation to the Kronecker power.}
	
	By letting
	\begin{align}\label{eq tilde y}
	\trunc{y}(t)\eqdefn[\begin{array}{ccccc}
	1 & \truncdim{x}{1}(t)^{\intercal} & \truncdim{x}{2}(t)^{\intercal} & \cdots & \truncdim{x}{\dimtru}(t)^{\intercal}\end{array}]^{\intercal},
	\end{align} 
  we obtain what we call the ``truncated system'', which is a discrete-time 
	linear time-invariant system given by
	\begin{align}\label{eq:truncatedSystem}
	\begin{split}
	\trunc{y}(t+1) &= \EE{\dimtru,\dimtru}\trunc{y}(t), \quad t\in\nonnegativeInt, \\
	\trunc{y}(0) &= [\begin{array}{cccc}
	1 & \E[x_\ini(t)]^{\intercal} & \cdots & \E[x_\ini^{[\dimtru]}(t)]^{\intercal}
	\end{array}]^{\intercal}.
	\end{split}
	\end{align}
	The truncated system allows us to iteratively compute approximations
  of the moments of $x(t)$ at consecutive time instants.
	Moreover, the approach only requires an offline computation of the
	matrix $\EE{\dimtru, \dimtru}$.

	\begin{ex}\label{ex:3}
	By considering $\dimtru = 2$, we may approximate the first and and the second moments of the system in Example~\ref{ex:2} by the following truncated system.
	\begin{align*}
	\trunc{y}(t+1) &= [\begin{array}{ccccc}
	1 & \truncdim{x}{1}(t+1)^{\intercal} & \truncdim{x}{2}(t+1)^{\intercal})\end{array}]^{\intercal}\\
	&=
	\mathbb{E}\begin{bmatrix}\begin{bmatrix}
	1 & 0 & 0 \\
	0 & F_1(t) & F_2(t) \\
	0 & 0 & \begin{small} F_1(t) \otimes F_1(t)\end{small}  
	\end{bmatrix} \end{bmatrix} 
	\begin{bmatrix}
	1 \\ \truncdim{x}{1}(t)^{\intercal} \\ \truncdim{x}{2}(t)^{\intercal}
	\end{bmatrix}\\
	&=\EE{2,2}\,\trunc{y}(t),
	\end{align*}	
	where $\truncdim{x}{1}(0)^{\intercal} = \mathbb{E}[x_\ini(t)]^{\intercal}$
	and $\truncdim{x}{2}(0)^{\intercal} = \mathbb{E}[x_\ini^{[2]}(t)]^{\intercal}$.
	\end{ex}
	
	\subsection{Computation of Truncation Errors}
  \label{sec:TruncationError}
	
	We now consider the error due to the truncation.
	We consider $j_0 \in \{0,\ldots,\dimtru\}$, and
	let $\err{j_0}(t)\in\real^{n^{j_0}}$ denote the error of the $j_0$-th
	moment, that is,
	\begin{align}\label{eq:error}
	\err{j_0}(t) \eqdefn \mathbb{E}[x^{[{j_0}]}(t)] - \truncdim{x}{j_0}(t).
	\end{align}
	
	First, by \eqref{eq:momentsequation}, we have
	\begin{align*}
	\mathbb{E}[x^{[j_0]}(t)]
	&=\sum_{j_1=0}^{j_0\dimsys} E_{j_0,j_1}\mathbb{E}[x^{[j_1]}(t-1)]\nonumber\\
	&=\sum_{j_1=0}^{j_0\dimsys} E_{j_0,j_1}\sum_{j_2=0}^{j_1\dimsys} E_{j_1,j_2} \mathbb{E}[x^{[j_2]}(t-2)]\nonumber\\
	&=\sum_{j_1=0}^{j_0\dimsys} E_{j_0,j_1}\sum_{j_2=0}^{j_1\dimsys} E_{j_1,j_2} \cdots
	\sum_{j_t=0}^{j_{t-1}\dimsys} E_{j_{t-1},j_t}\mathbb{E}[x_\ini^{[j_t]}],
	\end{align*}
	
	and similarly, by \eqref{eq tilde x},
	\begin{align*}
	\truncdim{x}{j_0}(t)
	=\sum_{j_1=0}^{\dimtru} E_{j_0,j_1}\sum_{j_2=0}^{\dimtru} E_{j_1,j_2} \cdots
	\sum_{j_t=0}^{\dimtru} E_{j_{t-1},j_t}\mathbb{E}[x_\ini^{[j_t]}].
	\end{align*}
	
	From \eqref{A0} and \eqref{eq:defE}, we can observe that $E_{j,k} = 0$ for $k>j\dimsys$. Therefore, we obtain from the
	two equations above,
	\begin{align}
	&\err{j_0}(t) = \mathbb{E}[x^{[j_0]}(t)]  - \truncdim{x}{j_0}(t) \nonumber\\
	&\, = \smashoperator{\sum_{j_1=\dimtru+1}^{j_0\dimsys}} E_{j_0,j_1}
	\smashoperator{\sum_{j_2=0}^{j_1\dimsys}} E_{j_1,j_2} \cdots
	\smashoperator{\sum_{j_t=0}^{j_{t-1}\dimsys}} E_{j_{t-1},j_t}
	\mathbb{E}[x_\ini^{[j_t]}]\nonumber\\
	&~ + \smashoperator{\sum_{j_1=0}^{\dimtru}} E_{j_0,j_1}
	\smashoperator{\sum_{j_2=\dimtru+1}^{j_1\dimsys}} E_{j_1,j_2} \cdots
	\smashoperator{\sum_{j_t=0}^{j_{t-1}\dimsys}} E_{j_{t-1},j_t}
	\mathbb{E}[x_\ini^{[j_t]}]\nonumber\\
	&~ + \cdots\nonumber\\
	&~ + \smashoperator{\sum_{j_1=0}^{\dimtru}} E_{j_0,j_1} \cdots
	\smashoperator{\sum_{j_{t-1}=0}^{\dimtru}} E_{j_{t-2},j_{t-1}}
	\smashoperator{\sum_{j_t=\dimtru+1}^{j_{t-1}\dimsys}} E_{j_{t-1},j_t}
	\mathbb{E}[x_\ini^{[j_t]}].
	\label{eq:ej0}
	\end{align}
	
	As an immediate consequence, we get the following:
	\begin{prop} \label{prop:exact-computation}
		Consider the truncated approximation of the moments of
		system~\eqref{eq:system} with truncation limit $\dimtru\in
		\nonnegativeInt$.
    If $j_0 \dimsys^t \leq \dimtru$, then $\truncdim{x}{j_0}(t) = \E
    [x^{[j_0]}(t)]$.
	\end{prop}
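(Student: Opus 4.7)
The plan is to argue directly from the expanded error formula~\eqref{eq:ej0}, showing that under the hypothesis $j_0\nu^t \le N_{\mathrm{T}}$ each of the $t$ summands in that expression vanishes because its outermost ``bad'' index is actually constrained below $N_{\mathrm{T}}+1$, making the corresponding summation range empty.

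More precisely, I would fix some $s \in \{1,\ldots,t\}$ and focus on the $s$-th term of \eqref{eq:ej0}, which has the form
\begin{equation*}
\sum_{j_1=0}^{N_{\mathrm{T}}} E_{j_0,j_1} \cdots \sum_{j_{s-1}=0}^{N_{\mathrm{T}}} E_{j_{s-2},j_{s-1}} \sum_{j_s=N_{\mathrm{T}}+1}^{j_{s-1}\nu} E_{j_{s-1},j_s}\cdots.
\end{equation*}
The key observation is that in every outer sum, the running index $j_r$ ($r<s$) is bounded by $\min(N_{\mathrm{T}}, j_{r-1}\nu)$; in particular $j_r \le j_{r-1}\nu$ whenever $j_r$ actually contributes. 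Iterating this bound gives $j_{s-1} \le j_0\nu^{s-1}$, so the innermost summation range satisfies $j_s \le j_{s-1}\nu \le j_0\nu^s \le j_0\nu^t$.

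Now I would invoke the hypothesis: since $j_0\nu^t \le N_{\mathrm{T}}$, the upper limit $j_{s-1}\nu$ is at most $N_{\mathrm{T}}$, which is strictly less than the lower limit $N_{\mathrm{T}}+1$. Hence the innermost sum is empty and the entire $s$-th term is zero. Since this argument applies to every $s\in\{1,\ldots,t\}$, \eqref{eq:ej0} yields $e^{(j_0)}(t)=0$, and by \eqref{eq:error} we conclude $\widetilde{x}^{(j_0)}(t) = \mathbb{E}[x^{[j_0]}(t)]$.

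The main obstacle is essentially bookkeeping: one must verify carefully that the implicit bound $j_r \le j_{r-1}\nu$ holds in the outer sums, which relies on property~\eqref{A0} that $E_{j,k}=0$ whenever $k>j\nu$ (so any index $j_r>j_{r-1}\nu$ contributes a zero factor $E_{j_{r-1},j_r}$). This observation was already used in deriving \eqref{eq:ej0} itself, so the proposition ultimately follows from a clean inductive chaining of that same constraint rather than from any new inequality.
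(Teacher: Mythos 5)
Your proof is correct and takes essentially the same route as the paper's: both expand the error via~\eqref{eq:ej0} and use~\eqref{A0} to chain the bounds $j_r \le j_{r-1}\dimsys$ down from $j_0$, concluding that the index forced to be at least $\dimtru+1$ would also have to be at most $j_0\dimsys^t \le \dimtru$. The only cosmetic difference is that you argue directly that each summand's inner range is empty (or its factors vanish), whereas the paper packages the same chaining as a proof by contradiction.
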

	\begin{proof}
	We show that, if $j_0 \dimsys^t \leq \dimtru$, then
    $\err{j_0}(t) = 0$ and the proposition holds.
    To this end, it is enough to show that  all lines of~\eqref{eq:ej0} are equal to $0$ for any sequence $j_1,\ldots,j_t$ of relevant indices.
    Let us pick any $i$th line and any sequence $j_1,\ldots,j_t$.
     It is enough to show that there exists $k \in \{1, \ldots t\}$ where $j_{k} > j_{k-1} \dimsys$, so that
    $E_{j_{k-1}, j_{k}} = 0$ by \eqref{A0}, which also makes the $i$th line equal to $0$.  
For the sake of contradiction, we assume that
$j_{k} \leq j_{k-1} \dimsys$ for all $k \in \{1, \ldots t\}$, which makes 
$j_{i-1} \leq j_{i-2} \dimsys \leq j_{i-3} \dimsys^2 \leq \ldots \leq j_0 \dimsys^{i-1}\leq j_0\dimsys^t$.
However, notice that we have $j_{i-1} \geq \dimtru + 1 > \dimtru \geq j_0 \dimsys^t$, which is a contradiction that proves the desired result.
	\end{proof} 
	This shows that, for large values of truncation limit $\dimtru$, the
	proposed method computes \emph{exact} moments $\E[x^{[j_0]}(t)]$ for small enough
	$j_0$ and $t$.
	Note that this is due to the \emph{discrete-time} nature of the
	finite-dimensional polynomial system~\eqref{eq:system}.
	In the continuous-time case, approximation errors cannot be
	avoided in general~\cite{forets2017explicit}.
	
	Since $\E[x^{[j_0]}(t)] = \truncdim{x}{j_0}(t) + \err{j_0}(t)$,   
	if
	$\err{j_0}(t)$ could efficiently be computed, then so would
	$\E[x^{[i]}(t)]$, and there would be no need in using the truncated
	system.  
	However, the exact value of $\err{j_0}(t)$ is generally hard to compute.
	Therefore, in the following section, we provide upper bounds for $\err{j_0}(t)$.

  \subsection{Approximation of Error Bounds}
	\label{subsec:approx_indiv}
	
	We now investigate the approximation error introduced by truncation.
	For a given $j_0 \in \{0,\ldots,\dimtru\}$, 
	our goal is to obtain an upper bound on  $\infnorm{\err{j_0}(t)}$, 
  which is the error
  on the $j_0$-th moment introduced by the truncation (where $\infnorm{\cdot}$ denotes the infinity norm). 
	

	Bounds on $\infnorm{\err{i}(t)}$ allow us to use various techniques to study the distribution
	of the state at future time steps.
	We illustrate this in Section~\ref{sec:ellipsoid}
  by using tail
  probability approximations  to compute a
  safety area, for which we know that the probability of the system
  landing outside that area is bounded by a predefined constant.
	
	\subsubsection{Global Error Bound}
	First, let
	$
	\xi(t) \eqdefn \displaystyle \max_{0 \leq j \leq j_{0}\dimsys^t}
		\infnorm{\mathbb{E}[x_\ini^{[j]}]}.
	$
	Using $\xi(t)$, we can derive bounds for $\err{j_0}(t)$ by first
	reorganising~\eqref{eq:ej0} according to the moments of $x_\ini$,
	which gives
	\begin{align}
	\err{j_0}(t) = \sum_{j=0}^{j_0\dimsys^t} \widetilde E_j \mathbb E[x_\ini^{[j]}], \label{eq:ej1}
	\end{align}
	where each $\widetilde E_j \matrixdim{n^{j_0}}{n^j}$ is a simple sum
	of products of $E_{n,m}$'s, obtained by this reorganisation
	of~\eqref{eq:ej0}.
	All $\widetilde{E}_j$ can thus be computed
  offline (note that $\widetilde E_j$ is dependent on $t$, but we keep
  this implicit for readability).
	
  From this, we can derive a global bound
  \begin{align} 
    \infnorm{\err{j_0}(t)} \leq \xi(t) \sum_{j=0}^{j_0\dimsys^t} \infnorm{\widetilde{E}_j},
  \end{align}
  where $\sum_{j=0}^{j_0\dimsys^t} \infnorm{\widetilde{E}_j}$ can be computed
  offline.

	Observe that $\xi(t)$
	can be efficiently computed in some cases.
	An obvious situation is when the position $x_\ini$ is determined, in
  which case we have $\xi(t) = \max \{1, \infnorm{x_\ini}^{j_{0}\dimsys^t} \}$.
  Another case is when $x_\ini$ obeys a well-known distribution whose
  moments are easy to compute, such as uniform or normal distributions.
 Another case is when the system satisfies $x(t) \in \mathbb R$ and $x_\ini \in [0, 1]$, in which case 
	$\infnorm{\mathbb{E}[x_\ini^{[j]}]}$ is decreasing and we have 
	$\xi(t) = \infnorm{\mathbb{E}[x_\ini^{[0]}]} = 1$.
	
\subsubsection{
Error Bound using Partial Exact Computation with Block Indices on Moments 
}
We can further refine~\eqref{eq:ej1} to consider a single line $i \leq n^{j_0}$
	of the equation, for which we get
	\begin{align}\label{eq:err_i}
	\err{j_0}_i(t) = \sum_{j=0}^{j_0\dimsys^t} v_{j,i} \mathbb E[x_\ini^{[j]}],
	\end{align}
	where $v_{j, i} \matrixdim{1}{n^j}$ is the $i$th row of $\widetilde{E}_j$. 
	By repeated application of triangle and Cauchy-Schwarz inequalities,
  we have
	\begin{align}\label{eq:err_bound_crude}
	\lvert \err{j_0}_i(t) \rvert
	\leq \xi(t) \sum_{j=0}^{j_0\dimsys^t} \infnorm{v_{j,i}},
	\end{align}
	where $\sum_{j=0}^{j_0\dimsys^t} \infnorm{v_{j,i}}$ can also be
	computed offline.

	This bound can, however, be crude in practice as the norm gets
	distributed over all sums and products. 
	We alleviate this problem in Proposition~\ref{prop:err_bound_J} where we compute tighter bounds while maintaining a reasonable computational cost.
	
	\begin{prop}\label{prop:err_bound_J}
	For any subset $J \subseteq \{0, \ldots, j_0\dimsys^t\}$, we have the bound  
	\begin{align*}
	| \err{j_0}_i(t) | 
	& \leq \Big|\sum_{j\in J}  v_{j,i} \mathbb E[x_\ini^{[j]}] \Big|
	+ \xi(t,J) \sum_{j \not \in J} \infnorm{ v_{j,i} },  
	\end{align*}
	where $\xi(t,J)= \displaystyle \max_{j \not\in J} \infnorm{\mathbb{E}[x_\ini^{[j]}]}$. 
	\end{prop}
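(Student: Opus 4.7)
The plan is to start from equation~\eqref{eq:err_i}, which gives an exact closed-form expression for $\err{j_0}_i(t)$ as a sum over $j \in \{0, \ldots, j_0\dimsys^t\}$ of the terms $v_{j,i}\,\mathbb{E}[x_\ini^{[j]}]$, and then split this sum into the part indexed by $J$ (which we will leave untouched and evaluate exactly) and the part indexed by its complement (which we will bound as in the cruder estimate~\eqref{eq:err_bound_crude}).

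First, I would write
\begin{align*}
\err{j_0}_i(t) = \sum_{j\in J} v_{j,i}\,\mathbb{E}[x_\ini^{[j]}] + \sum_{j\notin J} v_{j,i}\,\mathbb{E}[x_\ini^{[j]}],
\end{align*}
and apply the triangle inequality to separate the two contributions, keeping the first one intact. For the second one, I would apply the triangle inequality term by term and then bound each scalar product $|v_{j,i}\,\mathbb{E}[x_\ini^{[j]}]|$ by $\infnorm{v_{j,i}}\cdot\infnorm{\mathbb{E}[x_\ini^{[j]}]}$, using exactly the same inequality that underlies~\eqref{eq:err_bound_crude} (so no new analytic input is required).

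Finally, by the very definition of $\xi(t,J)$ as the maximum of $\infnorm{\mathbb{E}[x_\ini^{[j]}]}$ over $j\notin J$, we can replace each $\infnorm{\mathbb{E}[x_\ini^{[j]}]}$ by $\xi(t,J)$ and factor it out of the remaining sum. Combining the two estimates yields the stated bound.

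There is no substantial obstacle here; the proposition is essentially a bookkeeping refinement of~\eqref{eq:err_bound_crude}. The only thing worth emphasizing in the write-up is that the bound specializes to the crude estimate when $J=\emptyset$ (recovering $\xi(t,\emptyset)=\xi(t)$) and to the exact value when $J=\{0,\ldots,j_0\dimsys^t\}$ (the second summand vanishes), which clarifies how the choice of $J$ controls a trade-off between online computation cost and tightness of the bound.
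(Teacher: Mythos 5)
Your proof is correct and is exactly the argument the paper intends: split the sum in~\eqref{eq:err_i} over $J$ and its complement, keep the $J$-part exact, and bound the remainder with the same H\"older/triangle estimate underlying~\eqref{eq:err_bound_crude}. The paper in fact omits an explicit proof because it regards this as immediate, so your write-up fills in precisely the intended steps.
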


	The idea is that $J$ is a set of indices where one should avoid
	distributing the norm over the sum.
	One should pick $J$ to consist of those indices where
	the distribution is too crude and makes the error bound loose.
	In order for this method to be computationally efficient, one should
	pick $J$ that is of relatively small size, e.g., $\card{J} = \OOO(t)$.
	One possible way to choose $J$ is to fix a size $\widehat{j}$ and return the set
	of $\widehat{j}$ indices $j$ where $\norm{v_{j,i}}$ are the largest;
	another way is to return the set of $\widehat{j}$ indices $j$ such that $\infnorm{ \mathbb{E}[x_\ini^{[j]}]}$ are the largest.  
	For example, suppose $x_\ini$ is drawn from a truncated normal distribution over the interval $[0,1]$. 
	Then $\xi(t) = 1$ and $\xi(t,J) = \infnorm{ \mathbb{E}[x_\ini^{[j_{\min}]}] }$, where $j_{\min}$ is the smallest number that is not in $J$. 
	\rv{If $J$ is chosen as the set of first $\widehat{j}$ indices, 
 then $J = \{0, \ldots, \widehat{j}-1\}$ and 
	$\xi(t,J) = \infnorm{ \mathbb{E}[x_\ini^{[\widehat{j}]}] }$.}
  
\subsubsection{\label{subsubsec:boundK} 
Error Bound using Partial Exact Computation with Regular Indices on Moment Coordinates} 
  
We can further refine~\eqref{eq:ej1} and \eqref{eq:err_i} by considering each element of  matrix $\widetilde{E}_j$ and vector $\mathbb{E}[x_\ini^{[j]}]$.
Let 
$\tilde v =  
\begin{bmatrix}
	\widetilde{E}_0 &  \widetilde{E}_1 & \cdots &\widetilde{E}_{j_0 \dimsys^t}
	\end{bmatrix}$ 
and
$\tilde y = \mathbb E
\begin{bmatrix}
	x_\ini^{[0]\intercal} & x_\ini^{[1]\intercal} & \cdots & x_\ini^{[j_0 \dimsys^t]\intercal}
	\end{bmatrix}^\intercal$. 
In the same way as in~\eqref{eq:err_bound_crude}, we have
\[
\rvert \err{j_0}_i(t) \lvert \leq 
\sum_{k=0}^m  \lvert \tilde{v}_{i,k} \rvert \lvert \tilde{y}_k \rvert
\leq 
\max_{k\leq m} \lvert \tilde y_k \rvert \cdot \sum_{k=0}^m  \lvert \tilde{v}_{i,k} \rvert,
\]
where 
$\tilde y_k$ is the $k$th row of $\tilde y$ and
and $\tilde{v}_{i,k}$ is the element at $i$th row and $k$th column of $\tilde v$.
Then, in the same way as in 
 Proposition~\ref{prop:err_bound_J}, we obtain the following proposition.
\begin{prop}\label{prop:err_bound_K}
 For any subset $K \subseteq \{0, \ldots, m\}$, we have 
	\begin{align*}
	| \err{j_0}_i(t) | 
	& \leq \Big|\sum_{k\in K}  \tilde v_{i,k} \tilde y_k \Big|
	+  \max_{k \not\in K} \lvert \tilde y_k \rvert\sum_{k \not \in K} \lvert  \tilde v_{i,k} \rvert.  
	\end{align*}  
\end{prop}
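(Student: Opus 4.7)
The plan is to mimic the proof of Proposition~\ref{prop:err_bound_J} at the level of individual scalar entries rather than at the level of block rows. Starting from the exact expression~\eqref{eq:err_i}, I would first rewrite it as a single scalar inner product
\[
\err{j_0}_i(t) = \sum_{k=0}^{m} \tilde{v}_{i,k}\,\tilde{y}_k,
\]
by concatenating all the blocks $\widetilde{E}_j$ into the row vector $\tilde v$ and all the blocks $\E[x_\ini^{[j]}]$ into the column vector $\tilde y$, where $m$ is the total number of scalar columns. This rewriting is exactly the definition given in the paragraph preceding the proposition, and it brings the quantity into a form suitable for a straightforward scalar triangle inequality argument.

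Next, I would split the sum over $k \in \{0,\dots,m\}$ into the piece indexed by $K$ and the piece indexed by the complement of $K$, then apply the triangle inequality once at the outer level:
\[
\lvert \err{j_0}_i(t) \rvert \leq \Big\lvert \sum_{k \in K} \tilde{v}_{i,k}\,\tilde{y}_k \Big\rvert + \Big\lvert \sum_{k \notin K} \tilde{v}_{i,k}\,\tilde{y}_k \Big\rvert.
\]
For the second summand, I would distribute the absolute value inside the sum by a second application of the triangle inequality, bound each scalar product $\lvert \tilde{v}_{i,k} \rvert \lvert \tilde{y}_k \rvert$ by $\max_{k \notin K} \lvert \tilde{y}_k \rvert \cdot \lvert \tilde{v}_{i,k} \rvert$, and then factor the maximum out of the sum. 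This yields exactly the claimed bound.

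There is no real obstacle here: the argument is a direct scalar analogue of the one behind Proposition~\ref{prop:err_bound_J}, the only genuine content being the rewriting of $\err{j_0}_i(t)$ as a flat scalar sum so that the set $K$ can index individual entries rather than whole blocks. The one point that deserves a brief remark in the write-up is that keeping $\lvert \sum_{k \in K} \tilde{v}_{i,k}\,\tilde{y}_k \rvert$ as an un-distributed absolute value (rather than further bounding it by $\sum_{k\in K}\lvert \tilde{v}_{i,k}\rvert\lvert \tilde{y}_k\rvert$) is precisely what allows the bound to be tighter than~\eqref{eq:err_bound_crude}, since cancellations among the retained terms are preserved.
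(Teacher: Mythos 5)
Your proof is correct and follows essentially the same route as the paper: flatten $\err{j_0}_i(t)$ into the scalar sum $\sum_{k=0}^m \tilde v_{i,k}\tilde y_k$, split over $K$ and its complement, and apply the triangle inequality only to the complement, factoring out $\max_{k\notin K}\lvert\tilde y_k\rvert$. The paper gives no more detail than this (it simply says the result follows ``in the same way as in Proposition~\ref{prop:err_bound_J}''), so nothing is missing.
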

Again, one possible option of the set $K$ is to fix a size $\widehat{k}$ and 
then choose $\widehat{k}$ indices $k$ where $\lvert \tilde y_k \rvert$ are the largest.
In the next section, we will use these error bounds for probabilistic safety analysis.

\section{Ellipsoid Bounds for Probabilistic Safety Analysis}
\label{sec:ellipsoid}

Everything that we have computed up to this point can be computed
offline, as it does not depend on the actual system state.
In this section, we show how to do online computation of
probabilistic safety regions using tail probability analysis.
It crucially relies on approximations of the
first and second moments of the dynamics of the system.

\subsection{Tail Probability Approximation}
\label{subsec:tail_prob_ellipsoid}

In this section, we use the bounds on the error introduced by the
truncated system, which are derived in
Section~\ref{subsec:approx_indiv}, to give a lower bound on the
probability of the system being inside a given ellipsoid region after
$t$ time steps.
\rv{For any $j_0 \leq \dimtru$ and $i \leq n^{j_0}$, let $\Err{j_0}_i(t)$ be an upper bound of $\lvert\err{j_0}_i(t)\rvert$ obtained by any of the methods in  Section~\ref{subsec:approx_indiv}.}

We define the region we are interested in terms of
positive-semidefinite matrices.
A matrix $P \in \real^{n \times n}$ is positive-semidefinite if, for
all vectors $x \in \real^n$, $x^\intercal P x \geq 0$.
Such a $P$ defines a seminorm, called the $P$-seminorm, by $\norm{x}_P
= (x^\intercal P x)^{1/2}$.
The region defined by $\norm{x}_P \leq r$ is an ellipsoid (possibly of
infinite radius in some dimensions).

Recall that $n$
is the dimension of the state vector $x(t)$ of the system~\eqref{eq:system}.
Let 
\begin{align*}
x(t) = \begin{bmatrix}
x_1(t) & x_2(t) & \ldots & x_n(t)
\end{bmatrix}^\intercal \in \real^n,
\end{align*}
where $x_i$'s are scalars.  
Notice that, for any $i, j \in \{1,\ldots,n\}$,
 the $(ni + j)$-th row of $x^{[2]}(t)$ is
\begin{align*}
x^{[2]}_{ni + j}(t) = x_i(t) x_j(t). 
\end{align*} 
Hence, we can approximate the expectations
$\E[x_i(t) x_j(t)]$ using $\truncdim{x}{2}(t)$, which is the approximation of $\E[x^{[2]}(t)]$ computed with the methods in Section~\ref{subsec:approx_indiv}, as well as its error bound $ \Err{2}(t)$.
For the sake of readability, we use the notation
\begin{align*}
\truncdim{x}{2}_{(i,j)}(t) = \truncdim{x}{2}_{ni + j}(t) ,
\end{align*}
as it is the approximation of   
$\E[x_i(t) x_j(t)]$.
Similarly, we use 
\begin{align*}
\Err{2}_{(i,j)}(t) = \Err{2}_{ni + j}(t),
\end{align*}
which is the respective approximation error bound.
We also use $\truncdim{x}{1}_{i}(t)$ and $\Err{2}_{i}(t)$ to denote the $i$-th row of $\truncdim{x}{1}(t)$ and $\Err{1}(t)$, respectively.

Now, we assume that we know a global error bound
\begin{equation}\label{eq epsilon_P}
\norm{\truncdim{x}{1}(t) - \E[x(t)]}_P \leq \ \GErr.
\end{equation} 
We will discuss the technique to get this bound $\GErr$ later at the end of this section.


We then introduce Proposition~\ref{prop:tail_prob_analysis}, which  gives a mean to bound the probability of the system being outside of an ellipsoid centered
around $\truncdim{x}{1}(t)$.


\begin{prop}\label{prop:tail_prob_analysis}
  For given positive-semidefinite matrix $P $ and $\alpha > \GErr$,
  \begin{align}
    \begin{split}
      \proba&(\norm{x(t) - \truncdim{x}{1}(t)}_P \geq \alpha) \\
      &\leq
        \frac{\sum_{i,j=1}^n p_{ij} (\E[x_i(t) x_j(t)] - \E[x_i(t)]
        \E[x_j(t)])}{(\alpha - \GErr)^2}\rlap{,}
      \label{eq:tail_prob_analysis}
    \end{split}
  \end{align}
 where $p_{ij}$ is the element on the $i$th row and the $j$th column of the matrix $P$.
\end{prop}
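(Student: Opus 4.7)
The plan is to combine a triangle inequality step with a Chebyshev-type second moment bound, and then recognize the result as a quadratic form that can be approximated using the already computed moments.

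First, I would exploit that since $P$ is positive-semidefinite, $\|\cdot\|_P$ is a seminorm and therefore satisfies the triangle inequality. Applying it gives
\begin{align*}
\|x(t) - \truncdim{x}{1}(t)\|_P \leq \|x(t) - \E[x(t)]\|_P + \|\E[x(t)] - \truncdim{x}{1}(t)\|_P.
\end{align*}
By the global error assumption \eqref{eq epsilon_P}, the second summand is bounded by $\GErr$. Since $\alpha > \GErr$, the event $\{\|x(t) - \truncdim{x}{1}(t)\|_P \geq \alpha\}$ is therefore contained in the event $\{\|x(t) - \E[x(t)]\|_P \geq \alpha - \GErr\}$ with $\alpha - \GErr > 0$, and monotonicity of probability gives a corresponding inequality.

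Next, I would apply Markov's inequality to the nonnegative random variable $\|x(t) - \E[x(t)]\|_P^2$, which yields
\begin{align*}
\proba(\|x(t) - \E[x(t)]\|_P \geq \alpha - \GErr) \leq \frac{\E[\|x(t) - \E[x(t)]\|_P^2]}{(\alpha - \GErr)^2}.
\end{align*}
This is essentially a $P$-weighted Chebyshev inequality and is the only probabilistic ingredient in the argument.

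Finally, I would unfold the squared $P$-seminorm as the quadratic form
\begin{align*}
\|x(t) - \E[x(t)]\|_P^2 = \sum_{i,j=1}^n p_{ij}(x_i(t) - \E[x_i(t)])(x_j(t) - \E[x_j(t)]),
\end{align*}
and then take expectation termwise, using linearity, to obtain exactly $\sum_{i,j=1}^n p_{ij}(\E[x_i(t)x_j(t)] - \E[x_i(t)]\E[x_j(t)])$, which is the numerator on the right-hand side of \eqref{eq:tail_prob_analysis}. Chaining the three steps completes the proof. There is no real obstacle here; the only subtlety is making sure that $\alpha > \GErr$ is used to keep the denominator strictly positive and that the triangle-inequality step validly transfers the deterministic error budget $\GErr$ out of the probabilistic event.
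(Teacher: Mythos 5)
Your proposal is correct and follows essentially the same route as the paper's proof: a triangle-inequality/event-inclusion step to replace $\truncdim{x}{1}(t)$ by $\E[x(t)]$ at the cost of $\GErr$, then Markov's inequality applied to $\norm{x(t)-\E[x(t)]}_P^2$, and finally the expansion of the squared $P$-seminorm into the covariance quadratic form. The only difference is that you spell out the triangle-inequality justification that the paper leaves implicit.
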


\begin{proof}
  Since $\alpha > \GErr$, we get by Markov's inequality:
  \begin{align*}
    \proba&(\norm{x(t) - \truncdim{x}{1}(t)}_P \geq \alpha) \\
    & \leq \proba(\norm{x(t) - \E[x(t)]}_P \geq \alpha - \GErr)
      \\
    & = \proba(\norm{x(t) - \E[x(t)]}_P^2 \geq (\alpha -
      \GErr)^2) \\
    & \leq \frac{\E[\norm{x(t) - \E[x(t)]}_P^2]}{(\alpha -
      \GErr)^2} \\
    & = \frac{\sum_{i,j=1}^n p_{ij} (\E[x_i(t) x_j(t)] - \E[x_i(t)]
      \E[x_j(t)])}{(\alpha - \GErr)^2}\rlap{.} &\qedhere
  \end{align*}
\end{proof}


 

For a fixed matrix $P$ and known values of the approximations $\truncdim{x}{1}_i(t)$,
$\truncdim{x}{2}_{(i,j)}(t)$, and the error bounds
$\Err{1}_i(t)$ and $\Err{2}_{(i,j)}(t)$, 
each term of
the sum
in~\eqref{eq:tail_prob_analysis} can be bounded.
The derived bound depends on the signs of $p_{ij}$, $\Err{1}_i(t) -
\truncdim{x}{1}_i(t)$, and $\Err{1}_j(t) - \truncdim{x}{1}_j(t)$.
For example, if $p_{ij} > 0$, $\truncdim{x}{1}_i(t) \geq \Err{1}_i(t)$, and
$\truncdim{x}{1}_j(t) \geq \Err{1}_j(t)$, then we have 
\begin{align}
\begin{split}
  p_{ij}  (\E[&{x}_i(t) {x}_j(t)] - \E[{x}_i(t)] \E[{x}_j(t)]) \leq \\
  & p_{ij} (\truncdim{x}{2}_{(i,j)}(t) + \Err{2}_{(i,j)}(t) - \\
  &\phantom{p_{ij} (} (\truncdim{x}{1}_i(t) -
  \Err{1}_i(t)) (\truncdim{x}{1}_j(t) - \Err{1}_j(t) ))\rlap{.}
\end{split}
  \label{eq: example error bound}
\end{align}
Similar bounds can easily be found in all other cases
by using the following bounds:
\begin{gather*}
	\abs{\E[{x}_i(t) {x}_j(t)] -
		\truncdim{x}{2}_{(i,j)}(t)} \leq \Err{2}_{(i,j)}(t) \rlap{,} \\
	\abs{\E[{x}_i(t)] - \truncdim{x}{1}_i(t)} \leq \Err{1}_i(t)
		\rlap{.}
\end{gather*} 
In particular, if we know exact values for $\E[{x}_i(t)]$ and $\E[{x}_i(t)
{x}_j(t)]$ (i.e., if $\dimtru \geq 2 \dimsys^t$), we have Corollary~\ref{cor:tail_prob_analysis}.
\begin{cor}\label{cor:tail_prob_analysis}
  Consider a positive-semidefinite matrix $P $.
  If $\dimtru \geq 2 \dimsys^t$, for any $\alpha > \GErr$,
  \begin{align}
    \begin{split}
      \proba&(\norm{x(t) - \truncdim{x}{1}(t)}_P \geq \alpha) \\
      &\leq
        \frac{\sum_{i,j=1}^n p_{ij} (\truncdim{x}{2}_{(i,j)}(t) -
        \truncdim{x}{1}_i(t) \truncdim{x}{1}_j(t))}{(\alpha - \GErr)^2}
        \rlap{,}
    \end{split}
  \end{align}
  where $p_{ij}$ is the element on the $i$th row and the $j$th column of the matrix $P$.
\end{cor}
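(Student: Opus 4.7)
The plan is to derive the corollary as a direct consequence of Proposition~\ref{prop:tail_prob_analysis} together with the exact-computation result of Proposition~\ref{prop:exact-computation}. The key observation is that the hypothesis $\dimtru \geq 2\dimsys^t$ ensures that the truncated system recovers the exact first and second moments, which lets us replace the true moments on the right-hand side of~\eqref{eq:tail_prob_analysis} by their approximations.

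First I would verify the two instances of Proposition~\ref{prop:exact-computation} that we need. Taking $j_0 = 1$, we have $j_0 \dimsys^t = \dimsys^t \leq 2\dimsys^t \leq \dimtru$, so $\truncdim{x}{1}(t) = \E[x^{[1]}(t)] = \E[x(t)]$; in particular $\truncdim{x}{1}_i(t) = \E[x_i(t)]$ for every $i \in \{1, \ldots, n\}$. Taking $j_0 = 2$, we have $j_0 \dimsys^t = 2\dimsys^t \leq \dimtru$, so $\truncdim{x}{2}(t) = \E[x^{[2]}(t)]$; combined with the indexing convention $\truncdim{x}{2}_{(i,j)}(t) = \truncdim{x}{2}_{ni+j}(t)$ introduced just before the proposition, this gives $\truncdim{x}{2}_{(i,j)}(t) = \E[x_i(t) x_j(t)]$ for all $i, j \in \{1, \ldots, n\}$.

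Next, I would apply Proposition~\ref{prop:tail_prob_analysis}, whose hypotheses only require $P$ to be positive-semidefinite and $\alpha > \GErr$, both of which are inherited directly. This yields
\[
\proba(\norm{x(t) - \truncdim{x}{1}(t)}_P \geq \alpha) \leq \frac{\sum_{i,j=1}^n p_{ij} (\E[x_i(t) x_j(t)] - \E[x_i(t)] \E[x_j(t)])}{(\alpha - \GErr)^2}.
\]
Substituting $\E[x_i(t)] = \truncdim{x}{1}_i(t)$ and $\E[x_i(t) x_j(t)] = \truncdim{x}{2}_{(i,j)}(t)$ from the previous step gives the claimed bound term-by-term in the numerator.

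There is essentially no obstacle here, since the work has all been done in Propositions~\ref{prop:tail_prob_analysis} and~\ref{prop:exact-computation}; the only point to be mildly careful about is the bookkeeping between the Kronecker-power indexing of $\truncdim{x}{2}$ (a length-$n^2$ vector) and the double indexing $(i,j)$ used in the corollary, but this is exactly the notational convention established immediately before Proposition~\ref{prop:tail_prob_analysis} and so requires only a one-line remark.
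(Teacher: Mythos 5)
Your proposal is correct and matches the paper's (implicit) argument exactly: the paper derives the corollary by noting that $\dimtru \geq 2\dimsys^t$ makes the first and second moments exact via Proposition~\ref{prop:exact-computation}, and then substituting these into the bound of Proposition~\ref{prop:tail_prob_analysis}. Your explicit verification of the two instances $j_0 = 1$ and $j_0 = 2$ and the indexing remark are exactly the right details to spell out.
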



One part that we have left open is how to compute a value for
$\GErr$.
In Section~\ref{subsec:approx_indiv}, we have given bounds on
$\norm{ \mathbb{E}[x(t)] - \truncdim{x}{1}(t)}_\infty$.
This directly gives us bounds on $\norm{  \mathbb{E}[x(t)] - \truncdim{x}{1}(t)}_P$,
using the fact that $\norm{x}_P \leq \lambda_n^{1/2} \norm{x}_\infty$,
where $\lambda_n$ is the greatest eigenvalue of $P$.
{Therefore, $\GErr \leq \lambda_n^{1/2} \infnorm{\err{1}(t)}$. Recall that we can compute an upper bound of $\infnorm{\err{1}(t)}$ using methods in 
 Section~\ref{subsec:approx_indiv}.}

\subsection{Computation of Probabilistic Ellipsoid Bounds}
\label{subsec:compute ellipsiod}
Using the bounds above, we explain how to compute probabilistic
ellipsoid bounds, i.e., ellipsoid areas in which we know the system
will be with 
at least a given  probability.

The problem we are interested in is the following: given a
probabilistic system as in~\eqref{eq:system}, approximations
$\truncdim{x}{1}(t)$ and $\truncdim{x}{2}(t)$ of first and second moments of
the system at time $t$, bounds on the errors of these approximations,
and a constant $b\in(0,1)$, find an ellipsoid in which we know the
system state $x(t)$ will be with probability at least $(1-b)$.
The ellipsoid is preferred to be as small as possible, to give a precise bound.
The problem to find the smallest ellipsoid
 can be formulated as the following optimization problem: 
\begin{align}
\begin{split}
\underset{P}{\text{maximize}} 
	\quad & \det(P)\\
\textrm{subject to} 
	\quad & P  \text{ is positive-definite,}\\
  &\hspace{-4.3em}\rv{\frac{\sum_{i,j=1}^n p_{i,j} (\E[x_i(t) x_j(t)] - \E[x_i(t)]
        \E[x_j(t)])}{(\alpha - \GErr)^2}
      \leq b,} 
      \\
\end{split}\label{eq ellipsoid}
\end{align}  
for some fixed $\alpha > 0$, say $1$,
and $p_{i,j}$ is the element on the $i$th row and the $j$th column of $P$.
\rv{
From 
Proposition~\ref{prop:tail_prob_analysis},
 we have that
the system state at time $t$ is in 
the ellipsoid 
{$\setcomp{x \in \real^{\dimsys}}{\norm{x - \truncdim{x}{1}(t)}_P  
\leq \alpha}$} with probability at least $(1-b)$.}

\begin{remark}\label{remark notconvex}
 The problem~\eqref{eq ellipsoid} cannot be solved by convex optimization
methods -- and therefore not solved online -- because $\GErr$ depends on $P$ and the presence of
$\GErr$ makes it unclear whether the constraint is convex.
Moreover, computing $\GErr$ cannot be done online repeatedly. 
\end{remark}


Hence, we propose a three-step approximate solution for efficiency.
We first solve the problem above, but assume that all errors are $0$ to obtain the matrix $Q$: 
\begin{align}
\begin{split}
\underset{Q}{\text{maximize}} 
	\quad & \det(Q)\\
\textrm{subject to} 
	\quad & Q \text{ is positive-definite,}\\
  & \hspace{-3em} \frac{\sum_{i,j=1}^n q_{i,j} (\truncdim{x}{2}_{(i,j)}(t) -
      \truncdim{x}{1}_i(t) \truncdim{x}{1}_j(t))}{\alpha^2} \leq b\rlap{,}\\
\end{split}\label{eq:simple_ellipsoid}
\end{align} 
where $q_{i,j}$ is the element on the $i$th row and the $j$th column of $Q$.
This makes the problem convex~\cite{boyd2004convex}, so \eqref{eq:simple_ellipsoid} can be solved online.
\rv{Note that changing the value of $\alpha > 0$ does not change the volume of the ellipsoid $\setcomp{x \in \real^{\dimsys}}{\norm{x - \truncdim{x}{1}(t)}_Q  
\leq \alpha}$, since
the volume is inversely proportional to $\det(Q)$. 
So, we may use $\alpha =1$.} 

\rv{
Then, as a second step, we compute a 
matrix $P =s Q$ where $s$ is a positive real value that is obtained by solving the following optimization problem.
\begin{align}
\begin{split}
\underset{s}{\text{maximize}} 
	\quad & \det(s Q)\\
\textrm{subject to} 
	\quad & s \in \mathbb{R}, s > 0, \\ 
      &\hspace{-4.8em} \frac{\sum_{i,j=1}^n s q_{i,j} u_{i,j}(\truncdim{x}{1}(t), \truncdim{x}{2}(t),\Err{1}(t), \Err{2}(t))}{\alpha^2}
      \leq b \rlap{,}\\
\end{split}\label{eq:simple_ellipsoid_adjust}
\end{align}
where $u_{i,j}$ is a function of the moment approximations and truncation error bounds that satisfies
\begin{align*}
\begin{split}
   \E[x_i(t) &x_j(t)] - \E[x_i(t)]
        \E[x_j(t)]\\
    & \leq
    u_{i,j}(\truncdim{x}{1}(t), \truncdim{x}{2}(t), \Err{1}(t), \Err{2}(t)).\label{eq:simple_ellipsoid_adjust_u}
\end{split}
\end{align*}
Notice that $u_{i,j}$ can always be computed as discussed
   in Section~\ref{subsec:tail_prob_ellipsoid} (e.g., \eqref{eq: example error bound}).}
   
\rv{
The goal of this second step is to enlarge the ellipsoid of $Q$ obtained from the first step by using a scalar value $s$. This results in a new ellipsoid $P=sQ$ that satisfies the constraint of \eqref{eq ellipsoid} if $\GErr = 0$.
Note that we could also consider an optimization problem for  $P$ 
that satisfies the constraint of \eqref{eq ellipsoid} (if $\GErr = 0$)
directly, but the problem has too many constraints and is difficult to solve efficiently in practice. 
}
 
Let $P = sQ$ be obtained from solving \eqref{eq:simple_ellipsoid_adjust}.
For the final step, we compute $\GErr$ or its over-approximation 
using the methods in the previous sections.
Recall that, from \eqref{eq epsilon_P}, we have  
$\norm{\truncdim{x}{1}(t) - \E[x(t)]}_P \leq \GErr$.
Hence, we enlarge the ellipsoid of $P$
by using the bound $\GErr$, and take the ellipsoid 
\begin{equation} \label{eq:enlarged_ellipsoid}
\setcomp{x \in
\real^{\dimsys}}{\norm{x  - \truncdim{x}{1}(t)}_{P} \leq \alpha + \GErr}.
\end{equation} 
By 
\eqref{eq epsilon_P},
Proposition~\ref{prop:tail_prob_analysis}, and \eqref{eq:simple_ellipsoid_adjust}, 
the
system state $x(t)$ will be in the ellipsoid~\eqref{eq:enlarged_ellipsoid} with probability at least $(1-b)$.

\begin{figure}[t]
	\centering
		\includegraphics[width=7cm,trim={0cm 0.3cm 0cm 1.49cm},clip]{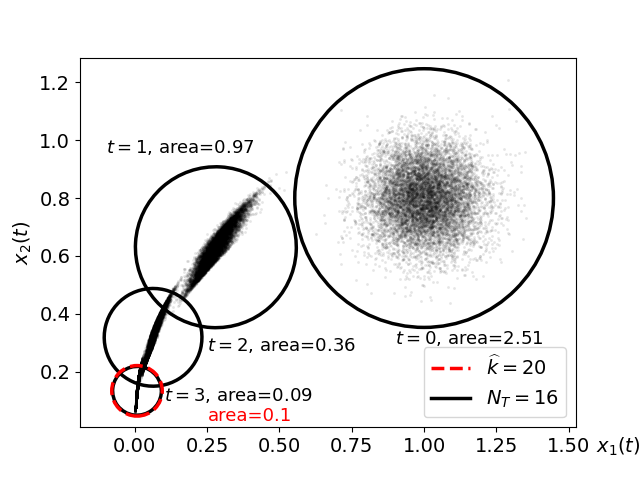}
		\includegraphics[width=7cm,trim={0cm 0.5cm 0cm 1.49cm},clip]{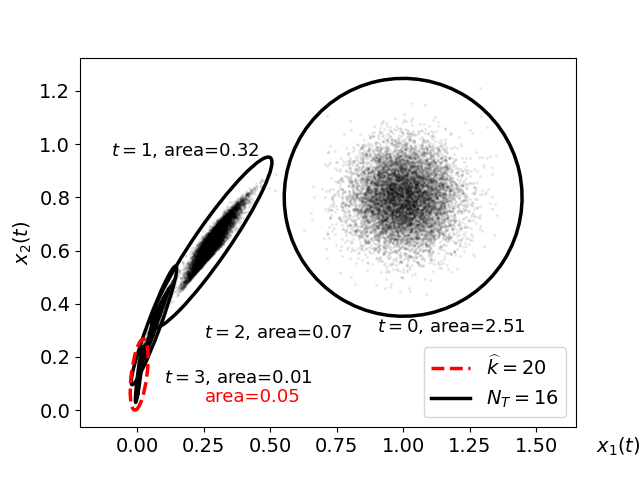}
  \caption{Safe ball (top) and ellipsoid (bottom) bounds for safe regions with
    probability 0.9 for the system in Example~{\ref{ex:4}}. 
    The black bounds are tight bounds computed using $\dimtru = 16$.
    The red dashed bound is computed using an approximate dynamics from Proposition~\ref{prop:err_bound_K} with $\dimtru = 8$ and $\widehat k = 20$.  
    }
  \label{fig ex4}
\end{figure}
\begin{ex}\label{ex:4}
  Figure~\ref{fig ex4} shows the safe regions with probability 0.9 for
  the system in Example~\ref{ex:1} at time steps $t \in \{0,1,2, 3\}$.
	More precisely, we defined $x_1(0)$ (\emph{resp.} $x_2(0)$) as a Gaussian random variable with mean 1 (\emph{resp.} 0.8) and standard deviation 0.1,  
	and $a(t)$ as independent and identically distributed random variables 
	following a uniform distribution on $[0.3,0.4]$.
	The safe regions are plotted in comparison to 10000 Monte Carlo simulations.
  Both plots are computed as described above; 
  however, 
   the regions in the top one are computed
   by restricting to $p_{11} = p_{22}$ and $p_{12} = p_{21} = 0$.
  (i.e., ball-shaped regions).
  The black solid bounds are computed using $\dimtru = 16$; therefore, they are tight bounds where the error $\err{1}(t) =  0$ (see Proposition~\ref{prop:exact-computation}).
	The red dashed bound at $t=3$ is computed using a truncated dynamics and the upper bound of 
the moment approximation errors
 as in Proposition~\ref{prop:err_bound_K}, where  $\dimtru = 8$ and the set $K$ consists of $\widehat k = 20$ indices $k$ where $\lvert \tilde y_{k} \rvert$ are the largest.
	Notice that the bound computed using the truncated dynamics 
 (the red one) is an over-approximation of the tight bound (the black one).
	Notice also that the areas of the ellipsoids are smaller than those of the ball-shaped regions (e.g., at $t = 2$, the area of the ball is 0.36, while that of the ellipsoid is 0.07).
\end{ex}

\begin{remark}
Note that we are in general not interested in optimizing all
dimensions of $P$.
For example, if a system has $(x,y,\theta)$ as coordinates, we may be
only interested in finding a bound for $(x,y)$.
Maximizing {$\det(P)$} under the constraints  in \eqref{eq:simple_ellipsoid} can lead to a
$P$ with a large unit ball in the $\theta$ dimension, but small in the
$x$ and $y$ dimensions, while there may be a better $P$ when
considering just $(x,y)$.
If $I \subseteq \set{1, \ldots ,n}$ is the set of dimensions of
interest, our goal is to maximize {$\det(P_I)$} under the
constraints above, where $P_I$ is the submatrix of $P$ whose indices
are in $I$.
\end{remark}

\section{Smaller matrices with reduced Kronecker powers}
\label{section: reduced Kronecker}
The main bottleneck of our method is the size of the matrix $E$ we compute offline.
One reason for this is due to duplications of computations. Indeed, the Kronecker 
power of a vector contains several times the same element. For example, 
the Kronecker square of the vector 
$x = \begin{bmatrix} a & b \end{bmatrix}^\intercal$ is given by 
$x^{[2]} = \begin{bmatrix} a^2 & ab & ba & b^2 \end{bmatrix}^\intercal$ 
and $ab =ba$ appears twice. In this section, we 
describe a \emph{reduced Kronecker power} whose elements are the same as 
the normal Kronecker power, but without any duplication. For example, the reduced 
Kronecker square of $x$ above will be 
$\redpower{x}{2} = \begin{bmatrix} a^2 & ab & b^2 \end{bmatrix}^\intercal$.

It should be noted that the notion of reduced Kronecker powers was also presented by~\citeasnoun{alma991000034209707066} and~\citeasnoun{carravetta1996polynomial} but used for different problems. 
In this work, we need to efficiently represent those reduced Kronecker
powers, and also accommodate the previous sections with this notion.
This requires the development of an operation corresponding to matrix
multiplication (see~\eqref{eq:reduced_multiplication_ex} below) in
order to propagate moments.
Formally, we need to manipulate polynomials in a clever way, and this
is one of our novelties compared to the literature.

Fix a vector $x = \begin{bmatrix} x_1 & \ldots & x_n \end{bmatrix}^\intercal \vectordim{n}$. 
Each element of a reduced Kronecker power of $x$ 
will correspond to a $n$-tuple of natural number $(m_1, \ldots, m_n)$, this element 
being given by $x_1^{m_1} \ldots x_n^{m_n}$. The degree of such an $n$-tuple is 
given by the sum of its elements $\sum_{i=1}^n m_i$.
Denote the set of $n$-tuples of degree $m$ by $\mathcal{I}_{n,m}$.
This set can be totally ordered by lexicographic order, that is,
$(m_1, \ldots, m_n) < (m'_1, \ldots, m'_n)$ if there is $k$ such that $m_k < m'_k$ and for all $j < k$, $m_j = m'_j$.
For example, $\mathcal{I}_{2,2} = \{(2,0) > (1,1) > (0,2)\}$.

The \emph{$m$-th reduced Kronecker power of $x$} is then given by:
\begin{equation}
\redpower{x}{m} = \begin{bmatrix} x_1^{m_1} \ldots x_n^{m_n} 
	\mid (m_1, \ldots, m_n) \in \mathcal{I}_{n,m} \end{bmatrix}^\intercal.
\end{equation}
This means that the first element of $\redpower{x}{m}$, namely $x_1^m$, 
is given by the largest
element of $\mathcal{I}_{n,m}$, namely $(m, 0, \ldots, 0)$, 
the second element, namely $x_1^{m-1}x_2$, is given by 
the second largest element of $\mathcal{I}_{n,m}$, namely $(m-1, 1, 0, \ldots, 0)$,
and so on. As claimed earlier, the reduced square of 
$x = \begin{bmatrix} a & b \end{bmatrix}^\intercal$ is indeed 
$\redpower{x}{2} = \begin{bmatrix} a^2 & ab & b^2 \end{bmatrix}^\intercal$.

Using this reduced Kronecker power, we can describe our original system in the 
form
\begin{align}
\label{eq:system_reduced}
	\begin{split}
	x(t+1) &= \sum\limits_{i = 0}^{\dimsys} \widehat{F}_i(t)\redpower{x}{i}(t), 
		\quad t\in\nonnegativeInt,\\
	x(0) &=x_\ini.
	\end{split}
\end{align}
Compared to \eqref{eq:system}, where $F_i$ is a matrix of size $n\times n^i$, 
$\widehat{F}_i$ is of size $n\times\lvert\mathcal{I}_{n,i}\rvert$ obtained from 
$F$ by summing columns. For example, if \eqref{eq:system} is of the form:
\begin{equation*}
	\begin{bmatrix} x_1(t+1) \\ x_2(t+1) \end{bmatrix} = 
	\begin{bmatrix} a_{1,1} & a_{1,2} & a_{1,3} & a_{1,4} \\ 
		a_{2,1} & a_{2,2} & a_{2,3} & a_{2,4} \end{bmatrix}
	\begin{bmatrix} x_1(t)^2 \\ x_1(t)x_2(t) \\ x_2(t)x_1(t) \\ x_2(t)^2\end{bmatrix},
\end{equation*}
Then \eqref{eq:system_reduced} is of the form:
\begin{equation}
	\label{eq:reduced_multiplication_ex}
	\begin{bmatrix} x_1(t+1) \\ x_2(t+1) \end{bmatrix} = 
	\begin{bmatrix} a_{1,1} & a_{1,2} + a_{1,3} & a_{1,4} \\ 
		a_{2,1} & a_{2,2} + a_{2,3} & a_{2,4} \end{bmatrix}
	\begin{bmatrix} x_1(t)^2 \\ x_1(t)x_2(t) \\ x_2(t)^2\end{bmatrix}.
\end{equation}
Everything we described in the previous sections can be accommodated with 
this new power, reducing the size of the vectors and the matrices involved. More 
details on the saved space will be given in the experiment section.

In terms of implementation, this power relies on manipulating and generating the 
elements of $\mathcal{I}_{n,m}$ on the lexicographic order. This can be done by representing 
the $n$-tuples as monomials and most calculations can be done using abstract 
polynomials operations. In our implementation, we heavily used the python package
numpoly\footnote{https://pypi.org/project/numpoly/}.

\section{Experimental Results}
\label{sec:numerical-example}
	In this section, we provide two numerical examples and experimental
  results to illustrate our techniques. 
  All experiments are run on a standard laptop computer (MacBook Pro, M1 chip, 16G memory).

	\subsection{Stochastic Logistic Map}\label{subsec:example_slm}
	
	Consider the stochastic logistic map as studied by \citeasnoun{athreya2000},
	which is given by
	\begin{align*}
	x(t+1) &= r(t) x(t) (1-x(t)), \quad t\in \nonnegativeInt, \\
	x(0) &= x_0,
	\end{align*}
	where $x_0, r(0), r(1), \ldots$ are mutually independent random
  variables; $x_0$ takes values in $[0,1]$ and $r(t)$ all take values
  in $[0,4]$.
	The scalar $x(t) \in [0,1]$ represents the population of a species
	subject to growth rate $r(t)$.
	This system can equivalently be represented by~\eqref{eq:system} with
	$\dimsys = 2$, $F_0 (t) = 0$, $F_1 (t) = r(t)$, and $F_2 (t) = -r(t)$.
  For experiments, we chose all $r(t)$ to be uniformly distributed
  over the interval $[0.4, 0.6]$, and $x_0$ to follow a normal
  distribution of mean $0.5$ and standard deviation $0.1$ truncated to
  $[0, 1]$.

\subsubsection{Moment Approximation via Truncated System}

\begin{figure}[!tb]
	\centering
	\includegraphics[width=8cm,trim={0.2cm 0 0.6cm 0.55cm},clip]{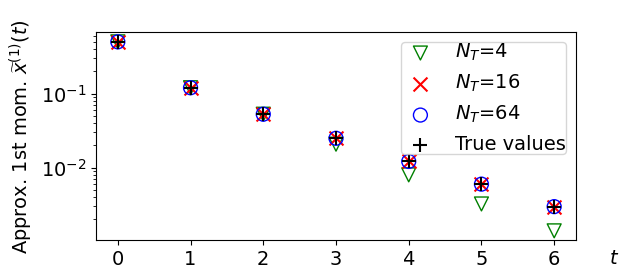}
	\includegraphics[width=8cm,trim={0.2cm 0 0.6cm 0.9cm},clip]{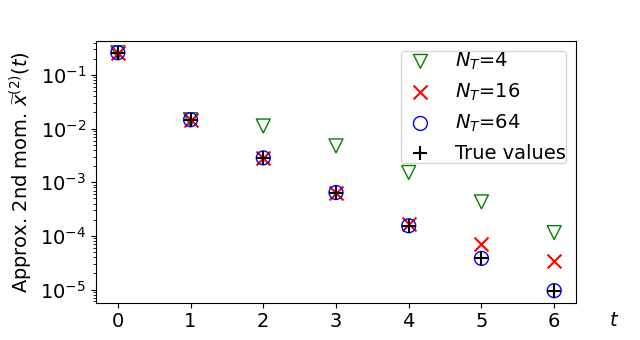}
    \captionspace
    \caption{Moment approximations for stochastic logistic map.}
	\label{fig:moments_logistic_map}
\end{figure}

We first compare our moment approximations for different truncation
limits to the true value of the moments (computed using our method
with $\dimtru = 256$, which gives the true value by Proposition~\ref{prop:exact-computation}).
In Figure~\ref{fig:moments_logistic_map}, we plot the first and second
moments of the truncated system with different truncation limits
$\dimtru$, and larger $\dimtru$ is required to obtain good
approximations of higher moments.
This is a natural consequence, as truncation discards more information on the dynamics of higher moments. 
		
	\subsubsection{Error Bound on Moment Approximations}
	\begin{figure}[!tb]
		\centering
			\includegraphics[width=8cm,trim={0cm 0cm 0cm 0.7cm},clip]{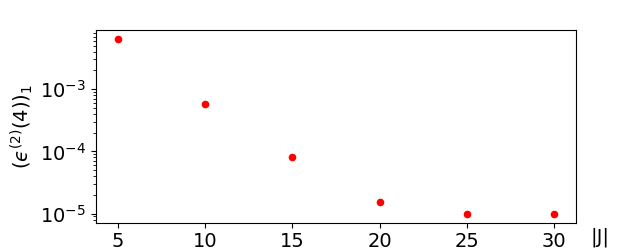}
    \captionspace
    \caption{Error bound on moment approximations.}
    \label{fig:mockup2a}
  \end{figure}
	Next, we evaluate our approximation method of error bounds for moments.
	Figure~\ref{fig:mockup2a} shows error bounds given by 
	Proposition~\ref{prop:err_bound_J}
  with different sizes of $J$, with parameters $\dimtru = 16$, $t = 4$, and $j_0 = 2$.
	The set $J$ contains the indices $j$ where $\lVert \mathbb{E}
  [x_0^{[j]}] \rVert$ are the largest. 
	We observe that the error bound quickly decreases as $|J|$ increases.
  This supports our expectation that we can use our error bounds for
  tail probability analysis with larger parameters ($t$, $\dimtru$,
  and $\dimsys$).
	Note that we cannot expect to get much more precise than the bound for
	$\abs{J} = 30$, since we get the exact error bound for $\abs{J} =
	33$.
	
	\subsubsection{Tail Probability Analysis}
	Lastly, we provide a
	result on tail probability analysis via the method in Section~\ref{sec:ellipsoid}. 
	We computed the error bound for $0 \leq t \leq 5$ using Proposition~\ref{prop:err_bound_J} with $\dimtru
  = 16$ and $|J| = 6 t$, where  $J$ contains the indices $j$ where $\lVert \mathbb{E}
  [x_0^{[j]}] \rVert$ are the largest.
 Figure~\ref{fig:mockup2b} summarizes the results of the analysis.
	Red intervals indicate the $95\%$-probability neighborhoods of
  $\truncdim{x}{1}(t)$ computed by using
  Proposition~\ref{prop:tail_prob_analysis}.
	Blue intervals with a solid line indicate the region where $95\%$ of
  10000 Monte Carlo simulations closest to its mean belong.
	Dotted intervals indicate the range of 10000 Monte Carlo simulations.
	We observe that the size of safety intervals given by our tail probability analysis is reasonably small. 
	It becomes cruder in later time steps. 
	This is expected, as the approximation error of moments, which is a bottleneck in refining the error bounds, becomes larger as time progresses (cf.\ approximate 2nd moment in Fig.~\ref{fig:moments_logistic_map}).
	
	There are two major advantages of our method compared to Monte Carlo simulation. 
	One is that our technique computes moment approximations much faster 
 (even for large $\dimtru$ 
	and a small number of samples) because we do not rely on generating 
	random numbers.
	This advantage is highlighted in Table~\ref{table:time}, which
  contains the online computation times for Monte Carlo simulations
  and our approach, averaged over 100 runs.
  The offline computation of our approach takes $0.014$ seconds for $\dimtru = 256$.
	Another advantage is that our safety interval gives a theoretical guarantee on probabilistic safety that cannot be achieved by Monte Carlo simulations. 

  \begin{table}
		\mytablecaption{Comparison of online computation times.}{table:time}
    \scriptsize
    {
    \begin{center}
      \begin{tabular}{|c|c|c|c|c|c|c|}
        \hline
        Method & \multicolumn{2}{c|}{Monte Carlo} & \multicolumn{4}{c|}{Moment propagation} \\
        \hline
        \multirow{2}{*}{Parameters} & \multicolumn{2}{c|}{num.\ samples} &
          \multicolumn{4}{c|}{$\dimtru$} \\
          \cline{2-7}
          & $10$ & 
          $10^4$ & $4$ & $16$ & $64$ & $256$ \\
        \hline
        Time ($\mu$s) & $4.4$e$10^3$ & 
          $2.8$e$10^5$ &
          49  & 51 & 57 & 67 \\
        \hline
      \end{tabular}
    \end{center}
    }
  \end{table}

%

	\begin{figure}[!tb]
		\centering
			\includegraphics[width=7.5cm,trim={0.1cm 0 0.5cm 0.9cm},clip]{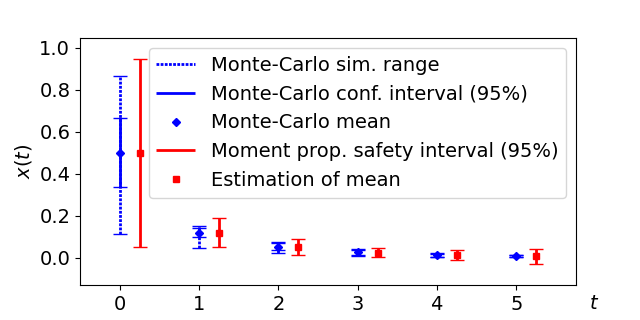}
        \captionspace
		\caption{Tail probability analysis with Monte-Carlo simulation and moment propagation with truncation limit $\dimtru = 16$.}
		\label{fig:mockup2b}
	\end{figure}

	\subsection{Application to Automated Driving}	\label{subsec:example_vd}
  Our second example is an application to automated driving.  For
  safety guarantees, autonomous vehicles need to predict their
  future positions.  One way to achieve this is set-based
  reachability, as advocated by \citeasnoun{althoff2014online}.
  To use their method,
  they must consider systems with bounded disturbances and use
  linearization around an equilibrium, that is, they approximate a polynomial 
  system by a linear one, using Lagrange remainders.
  Our method is based on Carleman linearization, which allows taking the effect of higher dimensions
  of the system into account more precisely than Lagrange remainders.
  Moreover, our approach is probabilistic, while theirs is set-based,
  so the two approaches give different types of guarantees.
	
	We consider a scenario in which, at each time step, the 
	vehicle measures its current position with some known sensor error
  distributions, computes moments of its current state, 
  and predicts its
  future positions 
  up to $t$ steps ahead in time by
  applying the truncated system to these moments.
	\subsubsection{Vehicle Dynamics}
	
	More precisely, we consider the \emph{kinematic bicycle model} of a
	vehicle from~\citeasnoun{kong2015kinematic}, which we rewrite as the following
	equivalent polynomial system
	\begin{gather*}
	\dotcoordx(t) = v(t)c(t), \quad
	\dotcoordy(t) = v(t)s(t), \\
	\dot \psi(t) = \frac{v(t)}{\ell}\sin\beta, \quad
  \dot v(t) = a(t), \\
	\dot c(t) = -\frac{s(t)v(t)\sin\beta}{\ell}, \quad
	\dot s(t) = \frac{c(t)v(t)\sin\beta}{\ell},
	\end{gather*}
	for $t \geq 0$, where $\coordx (t)\in \mathbb{R}$ and $\coordy (t)\in\mathbb{R}$
	represent the X--Y coordinates of the mass-center of the vehicle,
	$v(t)\in\mathbb{R}$ denotes its speed, $\psi(t)$ its inertial
  heading, and $a(t)\in\mathbb{R}$ its acceleration.
	The constants $\beta\in\mathbb{R}$ and $\ell>0$ respectively denote
	the angle of velocity and the distance from the vehicle's rear axle to
	its mass-center.
	The scalars $c(t)$ and $s(t)$ are auxiliary variables that are
	introduced to obtain the polynomial model above from the original
	model of \citeasnoun{kong2015kinematic} (which involves
	trigonometric terms), using the same techniques
	as \citeasnoun{carothers2005some}.
	
  The second-order Taylor expansion of the model above gives the
  following discrete-time approximation:
	\begin{align*}
	\coordx(t+\Delta) &= \coordx(t) + \Delta c(t)v(t) \nonumber \\
	&\quad+ \frac{\Delta^2}{2}\biggl(a(t)c(t) - \frac{s(t)v^2(t)\sin\beta}{\ell}\biggr), \\
	\coordy(t+\Delta) &= \coordy(t) + \Delta s(t)v(t) \nonumber \\
	&\quad+ \frac{\Delta^2}{2}\biggl(a(t)s(t) + \frac{c(t)v^2(t)\sin\beta}{\ell}\biggr), \\
	\psi(t+\Delta) &= \psi(t) + \Delta\frac{v(t)}{\ell}\sin\beta + \frac{\Delta^2}{2}\frac{a(t)}{\ell}\sin\beta,\\
	v(t+\Delta) &= v(t) + \Delta a(t), \\
	c(t+\Delta) &= c(t) - \Delta\frac{s(t)v(t)\sin\beta}{\ell} \nonumber \\
	&\quad- \frac{\Delta^2}{2}\biggl(\frac{c(t)v^2(t)\sin^2\beta}{\ell^2} + \frac{a(t)s(t)\sin\beta}{\ell}\biggr),\\
	s(t+\Delta) &= s(t) + \Delta\frac{c(t)v(t)\sin\beta}{\ell} \nonumber \\
	&\quad+ \frac{\Delta^2}{2}\biggl(-\frac{s(t)v^2(t)\sin^2\beta}{\ell^2} + \frac{a(t)c(t)\sin\beta}{\ell}\biggr),
	\end{align*}
	where $\Delta>0$. To describe the evolution of the vehicle states at times $0, \Delta, 2\Delta, \ldots$, we write this system in the form of~\eqref{eq:system}. In particular, consider the discrete-time instant $t \in \nonnegativeInt$ corresponding to the continuous time $t\Delta$. By letting
	\begin{align*}
	x(t) &\eqdefn [\coordx(t), \coordy(t), \psi(t), v(t), c(t), s(t)]^{\intercal},
	\end{align*}
	we obtain \eqref{eq:system} with $\dimsys = 3$ and the coefficients
  $F_0(t),\ldots,F_3(t)$ depend on $\Delta, \beta, \ell$, and $a(t)$.
  We consider the setting where the acceleration values $a(0), a(1), \ldots$ are
  independent uniformly-distributed random variables over $[0.9,1]$, $\Delta =
  0.1$, $\beta = \pi / 8$, $\ell = 2.5$, and, for the initial state, $\coordx(0)$,
  $\coordy(0)$, $v(0)$, $\psi(0)$ are independent Gaussian random
  variables with mean $0$ and standard deviation $0.1$, and $c(0) =
  \cos(\psi(0) + \beta)$ and $s(0) = \sin(\psi(0) + \beta)$.


  \subsubsection{Experimental Results}

	\begin{figure}[!tb]
		\centering
			\includegraphics[width=7.5cm,trim={0.5cm 0 0.5cm 0.9cm},clip]{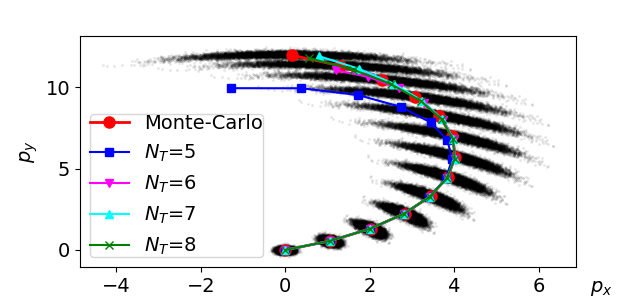}
        \captionspace
		\caption{First moment approximation of vehicle dynamics.}
		\label{fig:vehicle}
	\end{figure}
	\begin{figure}[!tb]
		\centering
			\includegraphics[width=8cm,trim={0cm 0.2 0.1cm 0.9cm},clip]{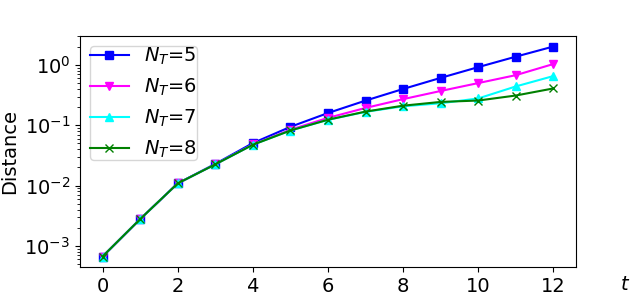}
    \captionspace
    \caption{Distance to the mean of the empirical distribution.}
		\label{fig:vehicle_distance}
	\end{figure}
	\begin{figure}[!tb]
		\centering
			\includegraphics[width=8cm,trim={0.2cm 0.2cm 0.1cm 0.9cm},clip]{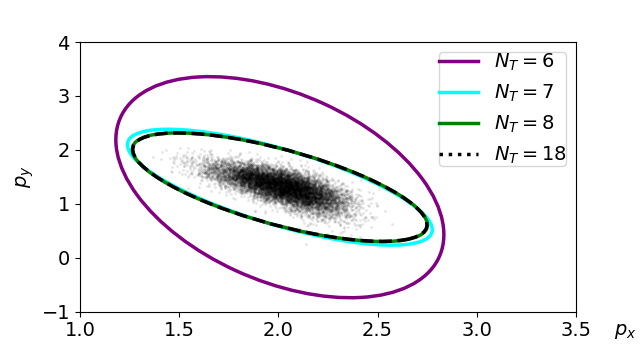}
    \captionspace
    \caption{The ellipsoids of the safe region for time step $2$ and probability bound 0.9 computed 
    using~\eqref{eq:enlarged_ellipsoid}
    and Proposition~\ref{prop:err_bound_K} 
    with $\widehat k = 900$ and different truncation limits $\dimtru$.    
 Using $\dimtru = 18 $, we obtain the region computed using the exact first and second moments (see Proposition~\ref{prop:exact-computation}).
    }
		\label{fig:vehicle_ellipsoids_NTs}
	\end{figure}
	\begin{figure}[!tb]
		\centering
			\includegraphics[width=8cm,trim={0.2cm 0.5 0.2cm 0.5cm},clip]{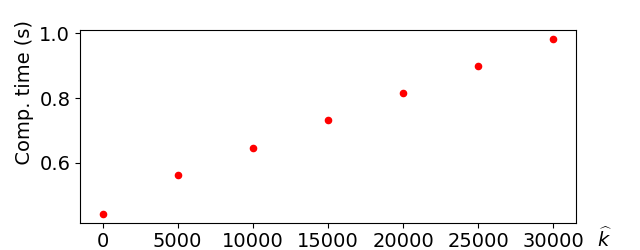}
    \captionspace
    \caption{The computation time for the error bounds on  the approximations for $\E[p_x]$, $\E[p_y]$, $\E[p_x^2]$, $\E[p_y^2]$, and $\E[p_x p_y]$, at $t=2$,  using Proposition~\ref{prop:err_bound_K} with $\dimtru = 8$ and different sets $K$, where each of the set $K$ contains $\widehat k$ indices $k$ where $\lvert \tilde y_{k} \rvert$ are the largest.
    } \label{fig:vehicle_error_comp_time}
	\end{figure}
	\begin{figure}[!tb]
		\centering
			\includegraphics[width=8.2cm,trim={0.1cm 0.2cm  0.2cm 0.9cm},clip]{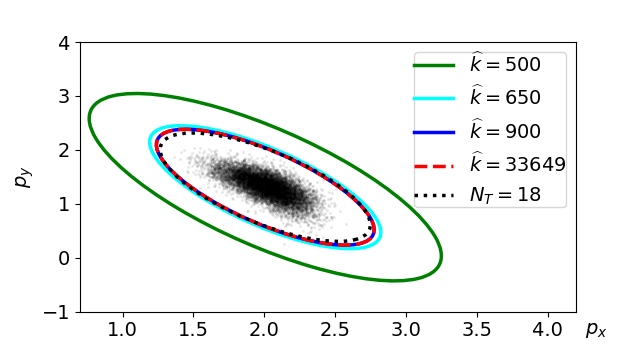}
    \captionspace
    \caption{The ellipsoids of the safe region for time step $2$ and probability bound 0.9 computed 
     using~\eqref{eq:enlarged_ellipsoid}
    and Proposition~\ref{prop:err_bound_K} with $\dimtru = 7$ and different sets $K$. Each ellipsoid is computed using the set $K$ consisting of $\widehat k$ indices $k$ where $\lvert \tilde y_{k} \rvert$ are the largest. Using $\dimtru = 18 $, we obtain the region computed using the exact moments (see Proposition~\ref{prop:exact-computation}).
    } \label{fig:vehicle_ellipsoids}
	\end{figure}
		
  Figure~\ref{fig:vehicle} shows the expected trajectory of the
  vehicle as approximated by our method for different truncation
  limits, as well as the empirical distribution computed by 10000 runs
  of Monte Carlo simulation and the mean of that
  distribution.
  Figure~\ref{fig:vehicle_distance} shows the distance
   between $\truncdim{x}{1}(t)$ and the mean of
  the empirical distribution for the same truncation limits.
  Figures~\ref{fig:vehicle} and~\ref{fig:vehicle_distance} show that
  larger truncation limits give truncated systems that follow the empirical
  distribution closer.
  It also shows that, for a fixed truncation limit, the distance to
  the empirical system grows larger with time. 

  Figure~\ref{fig:vehicle_ellipsoids_NTs} shows the ellipsoids of the safe region at time step 2 with probability bound 0.9 computed using~\eqref{eq:enlarged_ellipsoid} 
  and Proposition~\ref{prop:err_bound_K} 
  with different truncation limits $\dimtru$. 
  We can see that a larger $\dimtru$ gives a more precise  bound.

Figure~\ref{fig:vehicle_error_comp_time} shows the computation time (averaged over
1000 runs) for the error bounds on moment approximation at time step $2$ computed using Proposition~\ref{prop:err_bound_K}  with different sizes of the sets $K$.  
This result shows that, as expected, the computation time increases linearly with the size of $K$.

  Figure~\ref{fig:vehicle_ellipsoids} also shows the ellipsoids of the safe region at time step 2 with probability bound 0.9, but with different index sets $K$. 
  We can see that the larger $K$ is, 
  the closer the ellipsoids get to that obtained with the exact computation using $\dimtru = 18$ and Proposition~\ref{prop:exact-computation} (the black dashed bound). 
 Note that the set $K$ containing all 33,649 indices means that it contains all moments needed for an exact computation of the upper bound of truncation error at $t = 2$ 
 using~\eqref{eq:ej0} and the reduced Kronecker powers in Section~\ref{section: reduced Kronecker}. 
 In other words, the red dotted bound is the tightest bound we can obtain using our method for the truncation limit $\dimtru = 7$. We note that the bound computed with   $\widehat k = 900$ is very close to this tightest bound and their computation is much faster.
  

  \subsection{Comparison of Kronecker Powers and Reduced Kronecker Powers}
  
  	In Section~\ref{subsec:example_vd}, we used the
	reduced Kronecker powers described in 
	Section~\ref{section: reduced Kronecker} to generate the matrices 
	$E(\dimtru,\dimtru)$. Here, we illustrate the gain, both in time 
	and space, obtained by using reduced Kronecker powers instead 
	of non-reduced ones. The comparison results are compiled in 
	Figure~\ref{fig:comparison_kron_poly}.
	
	\begin{figure}[!tb]
		\centering
			\includegraphics 
            [width=7.5cm,trim={0cm 0.3 0.1cm 0.9cm},clip]{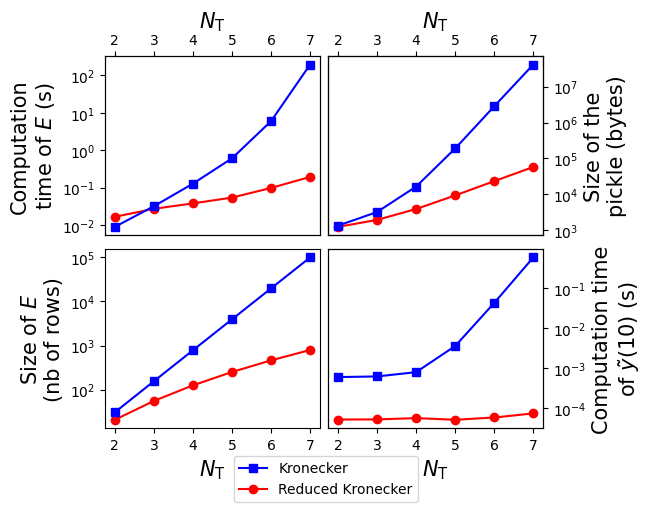}
        \captionspace
		\caption{Comparison between Kronecker powers and reduced ones}
		\label{fig:comparison_kron_poly}
	\end{figure}
	
	We compared the following performance indicators, for $\dimtru \in \{2, \ldots, 7\}$:
	\begin{itemize}
		\item The time to compute $E(\dimtru,\dimtru)$. 
		\item The size of the compressed data file (pickled npz format in Numpy/Python) containing $E(\dimtru,\dimtru)$.		 
		\item The number of rows of $E(\dimtru,\dimtru)$. 
		\item The time required to compute $\tilde{y}(10)$. 
	\end{itemize}
	We remark that using reduced Kronecker powers significantly 
	improves the performance in all aspects, making both the online and the offline computations 
	much faster, and the memory load much lighter.
	Furthermore, we also note that the computation of $E(8,8)$ using non-reduced 
	Kronecker powers reached an out-of-memory error after several 
	hours of computation, while we could compute $E(25,25)$ in less 
	than an hour with reduced ones.
  

\section{Conclusions}\label{sec:Conclusion}
In this paper, we have proposed a method to approximate the moments of a discrete-time stochastic polynomial system. This method is built upon a Carleman linearization approach with truncation, where approximate moments are obtained by propagating initial moments through a finite-dimensional linear deterministic difference equation. We have presented guaranteed bounds on the approximation errors. We have then used the approximate moments and the approximation error bounds together with a convex optimization technique to provide probabilistic safety analysis. We have demonstrated our method on a stochastic logistic map and a vehicle model with stochastic acceleration inputs.

Our moment approximation method is applicable to systems with both additive and multiplicative noise. Furthermore, it provides probabilistic guarantees even when the dynamics and the noise probability distributions are complicated. 
Our method involves computations in two phases: an initial offline computation phase where the approximate moment dynamics are obtained, and an online computation phase that involves propagation of the initial moments through the obtained dynamics. The online phase is very fast and we have shown in our numerical examples that our method can provide moment approximations in much shorter times compared to the Monte Carlo approach based on repeated simulations. For the offline computations, we have investigated a technique to improve the efficiency by using the symmetry of Kronecker powers and reducing the sizes of the matrices involved in the computations.

While in this paper we have addressed only polynomial systems, our method can also be applied to certain nonpolynomial systems. In some cases, nonpolynomial dynamics can be transformed to polynomial ones by introducing auxiliary variables, as illustrated in one of our numerical examples. In other cases, polynomial approximations can be useful.

We have used the approximated moments of the system's state for computing tail-probability bounds of the state being outside of ellipsoidal safety regions. As pointed out by  \citeasnoun{schmudgen2017moment} and \citeasnoun{john2007techniques}, approximate moments of random variables can be useful for approximating probability distributions. One of our future research directions is to investigate approximation of the probability distribution of the system state 
using its approximated moments.
\bibliographystyle{ifac}

\appendix
\section{Proof of Theorem~\ref{theorem H}}
\label{appendixA}
First,
we have the simple observation in Lemma~\ref{lem:Hjk_rec}.
\begin{lem}
	\label{lem:Hjk_rec}
	The sets $H_{j,k}$ satisfy the 
	 recursive formula 
	\begin{equation*}
	H_{j+1,k} = \bigcup_{n=0}^{\dimsys} \set{n} \times H_{j,k-n}\rlap{,}
	\end{equation*} 
	with the convention that $H_{j,k} = \emptyset$ for $k < 0$.
\end{lem}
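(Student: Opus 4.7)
The plan is to establish the equality by double inclusion, directly unpacking the definition of $H_{j+1,k}$ in a way that separates the first coordinate of each tuple from the rest.

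For the forward inclusion, I would take an arbitrary tuple $(i_1, \ldots, i_{j+1}) \in H_{j+1,k}$ and set $n \eqdefn i_1$. The defining constraints give $0 \leq n \leq \dimsys$, while the truncated tuple $(i_2, \ldots, i_{j+1})$ has length $j$, entries in $\{0, \ldots, \dimsys\}$, and sum $\sum_{l=2}^{j+1} i_l = k - n$. Hence $(i_2, \ldots, i_{j+1}) \in H_{j, k-n}$ and the original tuple belongs to $\set{n} \times H_{j,k-n}$, which is one of the sets in the union on the right-hand side. For the reverse inclusion, I would reverse this argument: starting from any $(n, i_2, \ldots, i_{j+1}) \in \set{n} \times H_{j,k-n}$ with $n \in \{0, \ldots, \dimsys\}$, prepending $n$ to a valid $j$-tuple summing to $k - n$ clearly yields a $(j+1)$-tuple in $\{0, \ldots, \dimsys\}^{j+1}$ whose components sum to $k$, hence an element of $H_{j+1,k}$.

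The one subtlety I would flag explicitly is the treatment of indices $n$ with $n > k$, for which $k - n < 0$ and the stated convention $H_{j,k-n} = \emptyset$ makes those summands contribute nothing to the union on the right. This is consistent with the forward inclusion as well, since a tuple in $H_{j+1,k}$ with non-negative entries summing to $k$ cannot have its first entry exceed $k$, so no element is missed.

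I do not anticipate any real obstacle: the result is essentially a definitional rearrangement, and I would keep the proof to a single short paragraph consisting of the forward inclusion, the reverse inclusion, and a one-line remark on the $k < 0$ convention.
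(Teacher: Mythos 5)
Your proof is correct. The paper states this lemma as a ``simple observation'' and supplies no proof of its own, and your double-inclusion argument --- splitting off the first coordinate of each $(j+1)$-tuple and checking that the convention $H_{j,k-n}=\emptyset$ for $k-n<0$ causes no mismatch --- is exactly the routine verification the authors are implicitly relying on.
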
 
Next, we show in Lemma~\ref{lemHd2} that Theorem~\ref{theorem H} holds for $j=2$. 
We omit $t$ for simplicity. 
\\ 
\begin{lem}\label{lemHd2}
$
\left(F_0 x^{[0]} + F_1   x^{[1]}+ \ldots + F_{\dimsys} x^{[\dimsys]}\right) ^ {[2]}  = 
\displaystyle\sum_{k=0}^{2\dimsys} \left(
      \smashoperator[r]{\sum_{(i_1, i_2)\in H_{2,k}}}
      F_{i_1} \otimes  F_{i_2} \right) x^{[k]}.
$
\end{lem}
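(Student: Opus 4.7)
The plan is to expand the Kronecker square directly, using two standard facts about the Kronecker product: bilinearity (so $\otimes$ distributes over sums), and the mixed-product property $(A \otimes B)(C \otimes D) = (AC) \otimes (BD)$. Together with the definition of Kronecker powers, the mixed-product property yields the key identity $(F_{i_1} x^{[i_1]}) \otimes (F_{i_2} x^{[i_2]}) = (F_{i_1} \otimes F_{i_2})(x^{[i_1]} \otimes x^{[i_2]}) = (F_{i_1} \otimes F_{i_2})\, x^{[i_1 + i_2]}$. This is exactly what turns a product of matrix-vector terms into a Kronecker product of matrices applied to a higher Kronecker power of $x$, and it is the mechanism that makes Carleman linearization work on the quadratic level.

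Concretely, I would first set $S = \sum_{i=0}^{\dimsys} F_i x^{[i]}$ so that the left-hand side equals $S \otimes S$. Expanding by bilinearity gives
\begin{equation*}
S \otimes S = \sum_{i_1=0}^{\dimsys} \sum_{i_2=0}^{\dimsys} (F_{i_1} x^{[i_1]}) \otimes (F_{i_2} x^{[i_2]}).
\end{equation*}
Applying the mixed-product identity term by term rewrites each summand as $(F_{i_1} \otimes F_{i_2})\, x^{[i_1 + i_2]}$.

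Finally, I would regroup the double sum according to the value of $k = i_1 + i_2$, which ranges from $0$ to $2\dimsys$. Since the inner index set $\{(i_1, i_2) : i_1 + i_2 = k,\; 0 \le i_l \le \dimsys\}$ is precisely $H_{2,k}$ by definition, the resulting expression matches the statement of the lemma exactly.

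No step here is a real obstacle; the only subtlety is making sure that the mixed-product step is applied in the correct order of factors (so that the matrices $F_{i_1}, F_{i_2}$ appear in the Kronecker product in the same order as $x^{[i_1]}, x^{[i_2]}$ do in the corresponding Kronecker power). This is automatic once one is careful with associativity of $\otimes$, but worth stating explicitly since it is what justifies the notation $H_{2,k}$ with ordered tuples rather than unordered multisets.
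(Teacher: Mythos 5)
Your proposal is correct and follows essentially the same route as the paper's own proof: expand the Kronecker square by bilinearity, apply the mixed-product property to rewrite each term as $(F_{i_1}\otimes F_{i_2})\,x^{[i_1+i_2]}$, and regroup the double sum by $k=i_1+i_2$ using the definition of $H_{2,k}$. Your explicit remark about preserving the order of factors matches the paper's own caveat that the Kronecker product is not commutative.
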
  
\begin{proof}
Recall that Kronecker product has the 
\emph{mixed-product property}  
	(i.e., $(A \otimes B)(C \otimes D) = AC \otimes BD$), but not \emph{commutative property} ($(A \otimes B) = (B \otimes A)$ may not hold) (see Section~13.2 of \citeasnoun{laub2005matrix})).
	Then, this lemma holds by Equation~\eqref{proof:A1}.
\end{proof}  
Now, we are ready to prove Theorem~\ref{theorem H}. 
\begin{proof}[Proof of Theorem~\ref{theorem H}]
The theorem obviously holds for $j = 0$ and $j=1$, and also holds for $j = 2$ by Lemma~\ref{lemHd2}. We will show the case where $j>2$ by induction.
We assume that the theorem hold for $j = n$, and consider the
inductive case where $j = n+1$ in Equation~\eqref{proof:3}, where we write
$\bigotimes_{m = 1}^n F_{i_m}$ for $F_{i_1} \otimes \ldots \otimes
F_{i_m}$ for brevity.
Note that the last line holds by adding terms that correspond to the
same moment $x^{[k]}$ (such terms span diagonals in the large sum
in Equation~\eqref{proof:3}) and by Lemma~\ref{lem:Hjk_rec}.
Therefore, the theorem holds for $j = n+1$, which concludes the proof.
\end{proof}    

\onecolumn

\begin{align}\label{proof:A1}\begin{split}
&\left(F_0  x^{[0]} + F_1 x^{[1]} + \ldots  + F_{\dimsys} x^{[\dimsys]}\right) ^ {[2]}  
\\
&= F_0  x^{[0]} \otimes \left(F_0  x^{[0]}  + \ldots  + F_{\dimsys}x^{[\dimsys]}\right)  
+ F_1 x^{[1]}\otimes\left(F_0  x^{[0]} + \ldots  + F_{\dimsys} x^{[\dimsys]}\right)+\ldots+ F_{\dimsys} x^{[\dimsys]}\otimes \left(F_0  x^{[0]}  + \ldots + F_{\dimsys}x^{[\dimsys]}\right)
\\
&= (F_0 \otimes F_0) x^{[0+0]}   + (F_0 \otimes F_1) x^{[0+1]} 
 + (F_0 \otimes F_2) x^{[0+2]}   +\ldots + (F_0 \otimes F_{\dimsys}) x^{[0+{\dimsys}]} 
\\
&  \hspace{2.8cm} +(F_1 \otimes F_0) x^{[1+0]}  + (F_1 \otimes F_1) x^{[1+1]} +\ldots + (F_1 \otimes F_{\dimsys -1}) x^{[1+{(\dimsys-1)}]} + (F_1 \otimes F_{\dimsys}) x^{[{1+\dimsys}]} 
\\
& \hspace{5.6cm} + (F_2 \otimes F_0) x^{[2+0]} +\ldots  
\\
&\hspace{8.3cm}+ \ldots+(F_{\dimsys}  \otimes F_{0}) x^{[{\dimsys}+0]} + 
\qquad 
\ldots +(F_{\dimsys} \otimes F_{\dimsys}) x^{[{\dimsys+\dimsys}]}  
\\
&=  \smashoperator[r]{\sum_{(i_1, i_2) \in H_{2,0}}}
       {\big(}F_{i_1} \otimes F_{i_2}  \big) x^{[0]}  + 
    \smashoperator[r]{\sum_{(i_1, i_2) \in H_{2,1}}}
       {\big(}F_{i_1} \otimes F_{i_2}  \big) x^{[1]}  + \ldots +
   \smashoperator[r]{\sum_{(i_1, i_2) \in H_{2,2 \dimsys}}}
      {\big(}F_{i_1} \otimes F_{i_2}  \big) x^{[2 \dimsys]}  
= 
\sum_{k=0}^{2\dimsys} \left(
      \smashoperator[r]{\sum_{(i_1, i_2)\in H_{2,k}}}
      F_{i_1} \otimes  F_{i_2} \right) x^{[k]}.
\end{split}
\end{align} 
\begin{align}\label{proof:3}
\begin{split}
&(F_0  x^{[0]} + F_1 x^{[1]} + \ldots  + F_{\dimsys} x^{[\dimsys]}) ^ {[n+1]}  
\\
&= (F_0  x^{[0]} + F_1 x^{[1]} + \ldots  + F_{\dimsys} x^{[\dimsys]}) \otimes (F_0  x^{[0]} + F_1 x^{[1]} + \ldots  + F_{\dimsys} x^{[\dimsys]}) ^ {[n]}  
\\
&= (F_0  x^{[0]} + F_1 x^{[1]} + \ldots  + F_{\dimsys} x^{[\dimsys]}) \otimes  
\sum_{k=0}^{n\dimsys} \left(
      \smashoperator[r]{\sum_{(i_1, \ldots, i_n)\in H_{n,k}}}
      F_{i_1} \otimes \ldots \otimes  F_{i_n} \right) x^{[k]}
\\
&= F_0  x^{[0]} \otimes 
\sum_{k=0}^{n\dimsys} \left(
      \smashoperator[r]{\sum_{(i_1, \ldots, i_n)\in H_{n,k}}}
      F_{i_1} \otimes \ldots \otimes  F_{i_n} \right) x^{[k]}
+ \ldots 
+ F_{\dimsys}  x^{[{\dimsys}]} \otimes 
\sum_{k=0}^{n\dimsys} \left(
      \smashoperator[r]{\sum_{(i_1, \ldots, i_n)\in H_{n,k}}}
      F_{i_1} \otimes \ldots \otimes  F_{i_n} \right) x^{[k]}
\\
&=\quad
      \smashoperator[r]{\sum_{(i_1, \ldots, i_n)\in H_{n,0}}}
      \big( F_0 \otimes   \bigotimes_{m = 1}^n F_{i_m}\big) x^{[0+0]} 
    + \smashoperator[r]{\sum_{(i_1, \ldots, i_n)\in H_{n,1}}}
      \big( F_0 \otimes  \bigotimes_{m = 1}^n F_{i_m}\big)  x^{[0+1]} 
    + \ldots + 
    \smashoperator[r]{\sum_{(i_1, \ldots, i_n)\in H_{n,n \dimsys}}}
      \big( F_0 \otimes  \bigotimes_{m = 1}^n F_{i_m}\big)  x^{[0+n \dimsys]} 
\\
&\quad 
	+\smashoperator[r]{\sum_{(i_1, \ldots, i_n)\in H_{n,0}}}
     \big( F_1 \otimes  \bigotimes_{m = 1}^n F_{i_m} \big) x^{[1+0]} 
    +\smashoperator[r]{\sum_{(i_1, \ldots, i_n)\in H_{n,1}}}
     \big( F_1 \otimes  \bigotimes_{m = 1}^n F_{i_m} \big) x^{[1+1]}  
    +   \ldots 
    + \smashoperator[r]{\sum_{(i_1, \ldots, i_n)\in H_{n,n \dimsys}}}
     \big( F_1 \otimes  \bigotimes_{m = 1}^n F_{i_m} \big) x^{[1+n \dimsys]}       
\\
&\quad +\ldots
\\
&\quad 
	+\smashoperator[r]{\sum_{(i_1, \ldots, i_n)\in H_{n,0}}}
     \big( F_{\dimsys} \otimes  \bigotimes_{m = 1}^n F_{i_m} \big) x^{[ \dimsys + 0]} 
    +\smashoperator[r]{\sum_{(i_1, \ldots, i_n)\in H_{n,1}}}
     \big( F_{\dimsys} \otimes  \bigotimes_{m = 1}^n F_{i_m} \big) x^{[ \dimsys+1]}  
    +   \ldots 
    + \smashoperator[r]{\sum_{(i_1, \ldots, i_n)\in H_{n,n \dimsys}}}
     \big( F_{\dimsys} \otimes  \bigotimes_{m = 1}^n F_{i_m} \big) x^{[ \dimsys +  n\dimsys]}  
\\
 &=\sum_{k=0}^{(n+1)\dimsys} \left(
      \smashoperator[r]{\sum_{(i_1, \ldots, i_{n+1})\in H_{n+1,k}}}
      F_{i_1} \otimes \ldots \otimes  F_{i_{n+1}} \right) x^{[k]}. 
\end{split}
\end{align}

\end{document}